\documentclass[11pt]{article}

\usepackage[margin=1in]{geometry}
\usepackage[T1]{fontenc}
\usepackage{lmodern}
\usepackage{amsmath,amssymb,amsthm}
\usepackage{bbm}
\usepackage{algorithm}
\usepackage[noend]{algpseudocode}
\usepackage{tikz}
\usepackage[textsize=footnotesize]{todonotes}
\usepackage[colorlinks=true,allcolors=blue]{hyperref}

\theoremstyle{plain}
\newtheorem{theorem}{Theorem}
\newtheorem{lemma}[theorem]{Lemma}

\theoremstyle{definition}
\newtheorem{definition}[theorem]{Definition}

  {\begin{quote}\textsc{#1.}\ \ignorespaces}%
  {\end{quote}}

\DeclareMathOperator{\dv}{div}

\DeclareMathOperator{\gr}{gr}
\newcommand{\E}{\mathbb{E}}
\newcommand{\ind}{\mathbbm{1}}
\renewcommand{\Pr}{\mathbb{P}}

\title{Rounding the Ball LP for Fair Max-Min Diversification}
\author{
  Juli\'an Mestre \qquad Lam Khai Trinh \qquad Anthony Wirth\\[4pt]
  \normalsize The University of Sydney}
\date{}

\begin{document}

\maketitle

\begin{abstract}
  Given~$n$ points in a metric space, partitioned into groups, $X_1,\dots,X_m$,
  and integer quotas, $k_1,\dots,k_m$, summing to~$k$, the \emph{Fair Max-Min
  Diversification} problem asks for a set of~$k$ points, exactly~$k_i$ from
  each group~$X_i$, maximizing the minimum pairwise distance. Addanki et
  al.~(ICDT 2022) described a ball LP for this problem and rounding algorithms
  that yield a factor 2 approximation whose fairness holds only in expectation
  and a factor 6 approximation with relaxed fairness guarantees, as well as an
  $(m+1)$-approximation with exact fairness.

  We introduce two new algorithms. The first method refines the rounding of Addanki
  et al., yielding a 2-approximate solution that is $\varepsilon$-fair with
  high probability, meaning that from every group $X_i$, at
  least~$(1-\varepsilon) k_i$ points are chosen.

  The second method in polynomial time returns a 4-approximation to the
  optimal value of the exact fairness version. In time $n^{O(1)} 2^{O(k)}$,
  which is fixed-parameter tractable in~$k$, we achieve an exactly fair
  4-approximation. Our method adapts the augmenting procedure behind Haxell's
  theorem~(Graphs Combin., 1995). This approximation factor does not depend
  on~$m$. Moreover, we show that no rounding of the ball LP achieves a
  smaller factor with exact fairness.

\end{abstract}

\section{Introduction}

Given a universe~$V$ of~$n$ elements and a metric (distance) function,~$d$,
the \emph{Max-Min diversification} problem asks for a~$k$-size subset~$S$
of~$V$ whose points are as spread out as possible. That is, we seek to
maximize the minimum distance between any two distinct points in~$S$. The
intent is to pick a representative sample of the universe, avoiding
redundancy. This max-min diversification is one of the most studied objectives
in data management.
Moumoulidou et al.~\cite{Moumoulidou0M21} introduced a \emph{fair} version of
this problem, where the universe~$V$ is partitioned into~$m$ disjoint groups
and the selected set must include a prescribed number,~$k_i$, of points from
group~$i$, for each~$i=1,\ldots,m$. As Moumoulidou et al.\ describe, ``the
goal is to maximize diversity in data selection with respect to numerical
attributes, while ensuring the satisfaction of fairness constraints with
respect to categorical ones''. For instance, a maps service summarizing the
restaurants in a city can pick a set that is geographically spread out while
including a prescribed number from each cuisine. These notions of fairness
capture, for example, proportional and equal representation. Fairness has
appeared as a criterion in algorithms for clustering and for
ranking~\cite{Moumoulidou0M21}.

Addanki et al.~\cite{Addanki0MM22} later introduced a linear-programming
relaxation of the problem, which we call the \emph{ball LP}, and showed that
rounding it gives substantially better approximation guarantees, both when the
fairness constraints only need to hold approximately and when they must hold
exactly.

We revisit the ball LP and give two new rounding schemes that push these
guarantees further. The first is a rounding algorithm that achieves diversity
factor~$2$ with approximate fairness, against the factor~$6$ guarantee of
Addanki et al.~\cite{Addanki0MM22}. The second is an algorithm for
\emph{exact} fairness, built by adapting the augmenting-path technique behind
Haxell's theorem on independent transversals. We discuss related work below
before summarizing our results.

\subsection{Related Work}

The unconstrained version of Max-Min diversification, where $m = 1$, is well
studied in facility location, information retrieval, and recommendation
systems (see Drosou and Pitoura~\cite{DrosouP10} for a survey of the area).
The simple greedy farthest-point algorithm, which Ravi et al.~\cite{RaviRT94}
call GMM, achieves a $2$-approximation, and no polynomial-time algorithm
improves on that factor unless $\mathrm{P} = \mathrm{NP}$.

Moumoulidou et al.~\cite{Moumoulidou0M21} were the first to study the fair
variant we consider here. They showed the problem stays NP-hard, that no
polynomial-time algorithm can beat a $2$-approximation unless $\mathrm{P} =
\mathrm{NP}$, and gave three algorithms: a $4$-approximation for the special
case of $m=2$ groups, a $(3m-1)$-approximation for general $m$, and a
$5$-approximation for constant $m$ and $k = o(\log n)$ that runs in time
exponential in $k$.

Addanki et al.~\cite{Addanki0MM22} introduced the ball LP that we build on
throughout this paper (Section~\ref{sec:prelim}). Rounding this LP, they
obtained a $2$-approximation whose fairness only holds in expectation, a
$6$-approximation with a $(1-\varepsilon)$-relaxed fairness guarantee that
holds with high probability, and an $(m+1)$-approximation with exact fairness,
already an improvement on the $(3m-1)$ bound of Moumoulidou et
al.~\cite{Moumoulidou0M21}. They also extended their approach to
constant-dimensional Euclidean space, where a $(1+\varepsilon)$-approximation
becomes achievable, and built streaming algorithms and composable coresets on
top of their rounding schemes.

Our exact-fairness algorithm uses the augmenting-path idea behind Haxell's
theorem on independent transversals~\cite{Haxell95a, Haxell01}, in the form
given algorithmically by Graf and Haxell~\cite{GrafH20}. Given a graph with
its vertices partitioned into groups, an independent transversal is an
independent set with exactly one vertex per group; Haxell's theorem gives a
combinatorial condition guaranteeing one exists. Asadpour et
al.~\cite{AsadpourFS12} use a similar augmenting idea for the Santa Claus
problem, grounding it in a configuration LP rather than a purely combinatorial
condition. We take a similar approach, but ground our augmenting procedure in
the ball LP. Specifically, our process relies on a bound on the LP weight of a
second (aka two-hop) neighborhood, which follows from the ball constraint and
the triangle inequality.

\subsection{Our Results}

We give two new rounding schemes for the ball LP of Addanki et
al.~\cite{Addanki0MM22}.

Section~\ref{sec:two} gives a greedy rounding that achieves diversity
factor~$2$ with $(1-\varepsilon)$-fairness, improving the factor~$6$ of
Addanki et al.\ for the same guarantee, under the hypothesis $k_i \ge
4\varepsilon^{-2}\ln(2m)$ for every $1 \le i \le m$. Our algorithm and their
algorithm share an analogous random process for ordering elements, based on a
solution to the ball LP, but have different rules for including elements in
the solution returned. Our method draws one point at a time and immediately
discards that point's entire ball, so what remains is the same process on a
smaller set. This self-similar structure supports a concentration bound on the
number of points selected from each group directly (Lemma~\ref{lem:tail}),
without relying on independence between selection events.

Section~\ref{sec:four} gives our main result, an iterative augmentation that
rounds the ball LP to a $4$-approximate \emph{exactly fair} solution, in time
$n^{O(1)}2^{O(k)}$, where $k = \sum_i k_i$ is the total quota. Unlike the
previous $(m+1)$-approximation, our factor of~$4$ does not degrade as the
number of groups,~$m$, grows, giving a strict improvement for $m \ge 4$. We
complement our positive result with an explicit instance showing that, with
exact fairness, the factor~$4$ is tight for any algorithm that rounds the ball
LP\@.

\section{Preliminaries}\label{sec:prelim}

An instance of the \emph{Fair Max-Min Diversification} problem is a finite
metric space, $(V,d)$, with $|V| = n$, a partition of $V$ into groups
$X_1,\dots,X_m$, and integer quotas $k_i \ge 1$ with $|X_i| \ge k_i$, for
every $1 \le i \le m$. We write $k = \sum_i k_i$ for the total quota, and
assume $k \ge 2$. The objective is to find a set~$S \subseteq V$
with maximum \emph{diversity},
\[\dv(S) = \min\{d(u,v) : u,v \in S,\ u \ne v\}\,,\]
and \emph{exact fairness},
\[ |S \cap X_i| = k_i\,, \qquad \forall 1 \le i \le m\,.\]
We denote the optimal value of the problem by
\[
  \ell^{*} = \max\{\dv(S) : S \text{ exactly fair}\}\,.
\]
Section~\ref{sec:two} relaxes the fairness requirement: For
$\varepsilon \in (0,1)$, set~$S$ is \emph{$\varepsilon$-fair} whenever
\[
  |S \cap X_i| \ge (1-\varepsilon)k_i\,,
  \qquad \forall 1 \le i \le m\,.
\]

\subsection{The conflict graph and the ball LP}

Following the treatment of Addanki et al.~\cite{Addanki0MM22}, for a given
guess $\gamma > 0$ of the optimal value $\ell^{*}$, we define the auxiliary
structures: the conflict graph and the ball LP.

\begin{definition}\label{def:conflict}
For a threshold $\gamma > 0$, the \emph{conflict graph} $H_\gamma$ is the
graph on $V$ in which distinct $u, v \in V$ are adjacent exactly when
$d(u,v) < \gamma$.
\end{definition}
Note that a subset $S \subseteq V$ is independent in $H_\gamma$ if and only if
$\dv(S) \ge \gamma$.

Throughout, for a vector $x \in \mathbb{R}^V$ and a subset $W \subseteq V$,
we write $x(W) = \sum_{u \in W} x(u)$. Also,
$B(p,r) = \{u \in V : d(p,u) < r\}$ is the \emph{open} ball of radius~$r$
about $p \in V$.

\begin{definition}\label{def:balllp}
For a threshold $\gamma > 0$, the \emph{ball LP at $\gamma$} is
\[
  \mathcal{P}_\gamma = \Bigl\{\, x \in [0,1]^V :
  x(X_i) \ge k_i \quad \forall 1 \le i \le m, \quad
  x\bigl(B(p,\gamma/2)\bigr) \le 1 \quad \forall p \in V \,\Bigr\}\,,
\]
whose two constraint families are the \emph{fairness} and the \emph{ball}
constraints, respectively. We call~$\gamma$ \emph{feasible} if
$\mathcal{P}_\gamma \ne \varnothing$, and let
$\gamma_{\max} = \sup\{\gamma : \gamma \text{ feasible}\}$.
\end{definition}
Addanki et al.~\cite{Addanki0MM22} call this the Fair Max-Min LP\@. They showed
that every $\gamma \le \ell^{*}$ is feasible, so
\begin{equation}\label{eq:gammamax}
  \gamma_{\max} \ge \ell^{*}\,.
\end{equation}
As~$\gamma$ increases, the balls~$B(p,\gamma/2)$ change only as~$\gamma$
passes $2\,d(u,v)$ for some pair $u,v \in V$ with $u \ne v$.
Consequently,~$\gamma_{\max}$ is one of these $O(n^2)$ values and is itself
feasible.
Solving the ball LP at each of these thresholds yields~$\gamma_{\max}$
together with some~$x \in \mathcal{P}_{\gamma_{\max}}$, in polynomial time.

\section{2-approximation with relaxed fairness}\label{sec:two}

In this section, we present a randomized algorithm whose output is always
$2$-approximate in diversity and is $\varepsilon$-fair with constant
probability, provided every quota is at least $4\varepsilon^{-2}\ln(2m)$. The
success probability can be boosted arbitrarily close to~$1$ by running the
algorithm until its output is $\varepsilon$-fair.

\begin{theorem}\label{thm:two}
Let $\varepsilon \in (0,1)$ and suppose $k_i \ge 4\varepsilon^{-2}\ln(2m)$ for
every $1 \le i \le m$.
There is a randomized polynomial-time algorithm that outputs a set
$S \subseteq V$ with $\dv(S) \ge \ell^{*}/2$ that, with probability at
least~$1/2$,
satisfies
\[
  |S \cap X_i| \ge (1-\varepsilon)k_i\,, \qquad \forall 1 \le i \le m\,.
\]
\end{theorem}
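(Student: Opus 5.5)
We solve the ball LP at $\gamma=\gamma_{\max}$ and obtain $x\in\mathcal{P}_{\gamma}$; by \eqref{eq:gammamax}, $\gamma\ge\ell^*$. We may assume $x(X_i)=k_i$ exactly, scaling down within each group if needed; scaling down keeps the ball constraints. The rounding draws points one at a time. Let $R=V$ be the set of available points. While $x(R)>0$, choose $p\in R$ with probability $x(p)/x(R)$, or by a comparable weighted rule. Add $p$ to $S$ and delete $B(p,\gamma/2)\cap R$ from $R$. Any two chosen points $p,q$, with $q$ chosen later, satisfy $d(p,q)\ge\gamma/2\ge\ell^*/2$, because $q$ survived the deletion of $p$'s ball. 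Hence $\dv(S)\ge\ell^*/2$ deterministically, and this part is routine.

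For fairness we must show $|S\cap X_i|\ge(1-\varepsilon)k_i$ for each $i$ with probability at least $1-1/(2m)$, and then take a union bound over the $m$ groups. The key observation is the self-similarity: after a step, the residual instance $(R, x|_R)$ is again of the same form, and the process continues on it. Each step removes LP mass $x(B(p,\gamma/2)\cap R)\le 1$ by the ball constraint. So the total number of steps is at least $x(V)=k$, and group $i$ loses at most one unit of mass per step.

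To make counts per group tractable, I would use a potential $\Phi_i=|S\cap X_i|-(x(X_i)-x(R\cap X_i))$. That is, we compare points gained against mass destroyed in $X_i$. The sampling should be coupled to group mass: for instance, draw $p$ from $R$ with probability proportional to $x(p)$ and measure the expected mass removed from $X_i$. Each step deletes $\le 1$ unit of total mass, and the chosen point lands in $X_i$ with probability $x(R\cap X_i)/x(R)$. I need to show that in expectation the gain of group $i$ keeps pace with its mass loss, i.e. $\E[\text{gain}]\ge \E[\text{mass lost in }X_i]$.

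The cleanest route mimics Addanki et al.'s random ordering: give each point an exponential clock with rate $x(p)$, so that the arrival order is the weighted sampling. A point $u$ is deleted only if an earlier arrival in $B(u,\gamma/2)$ is taken. We would then prove by induction on $|R|$ (the self-similar structure) a moment-generating-function bound $\E[e^{-\lambda|S\cap X_i|}]\le e^{-\lambda k_i+\lambda^2k_i/2}$. This is a martingale-difference argument with increments bounded by $1$, conditioning on the first drawn point and applying the inductive hypothesis to the residual instance. Choosing $\lambda=\varepsilon$ gives the tail $e^{-\varepsilon^2k_i/4}\le 1/(2m)$ precisely when $k_i\ge4\varepsilon^{-2}\ln(2m)$, which matches the hypothesis. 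A union bound then gives success probability at least $1/2$.

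The main obstacle is establishing the one-step inequality inside this induction. We must show that, conditioned on the first draw, the expected number of group-$i$ points selected plus the inductive bound on the remainder dominates $k_i$ in the MGF sense. The difficulty is that the deleted ball can remove up to one unit of $X_i$-mass while a point of another group is chosen. I would first try the Addanki-style analysis: each $u$ is picked with probability at least about $x(u)$ (under exponential clocks, $\Pr[u \text{ first in its ball}]\ge x(u)/x(B)\ge x(u)$), which yields the correct expectation. I would then obtain concentration via negative-type dependence, or via the inductive MGF bound that avoids independence.
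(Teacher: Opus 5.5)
Your rounding is the paper's greedy rounding (Algorithm~\ref{alg:round}), and your diversity argument and closing Markov-plus-union-bound step match the paper. The gap is the concentration inequality, which is the whole content of the fairness claim. You posit an MGF bound, call its one-step inequality ``the main obstacle,'' and never establish it. The fallbacks you list do not work either.

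\emph{Negative dependence is false here.} Take one cluster as in Lemma~\ref{lem:witness} but with shores of size $s$, set $\gamma/2=2$, let one shore be $A$, and give every point mass $1/(s+1)$, so every ball has mass exactly $1$. The first draw decides whether all $s$ points of $A$ enter $S$ or none do. So the selection events, and also the ``first in its ball'' events (which share the clocks of the other shore), are strongly positively correlated.

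\emph{A bounded-increment (Azuma) argument is too weak.} It gives a bound in terms of the number of relevant draws, which $x(X_i)$ does not control. Many tiny-mass points of $A$, each with a private heavy neighbor, give arbitrarily many such draws. You need a variance-sensitive bound.

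\emph{Your target bound is out of reach of your induction.} You aim for $\E[e^{-\lambda|S\cap X_i|}]\le\beta^{-k_i}$ with $\beta=e^{\lambda-\lambda^2/2}$. Applying the hypothesis to the residual instances reduces the step to $\sum_{v\in R}x(v)e^{-\lambda\ind[v\in A]}\beta^{c(v,R)}\le x(R)$, where $c(v,R)=x(A\cap N[v]\cap R)$. On the cluster above this reads $\beta^{s/(s+1)}+e^{-\lambda}\beta^{1/(s+1)}\le 2$. It fails for large $s$, because $e^{\lambda-\lambda^2/2}+e^{-\lambda}>2$ for all $\lambda\in(0,1]$. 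Separately, even granting that bound, $\lambda=\varepsilon$ gives exponent $\varepsilon^2k_i/2$, not $\varepsilon^2k_i/4$, so the arithmetic does not match.

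The missing idea is to index the hypothesis by the residual mass $a(R)=x(A\cap R)$ and use the largest base the step allows, $b=2-e^{-\lambda}$; the cluster shows no larger base works. The paper proves $\E[e^{-\lambda|S(R)\cap A|}]\le b^{-a(R)}$, and the step combines three facts:
\begin{enumerate}
\item The ball constraint gives $c(v,R)\le x(N[v])\le 1$, so convexity gives $b^{c(v,R)}\le 1+(b-1)c(v,R)$.
\item Since neighborhoods are symmetric, $\sum_{v\in R}x(v)c(v,R)=\sum_{u\in A\cap R}x(u)\,x(N[u]\cap R)\le a(R)$, by the ball constraint at each $u$. This is exactly your ``gain keeps pace with mass loss'' inequality, which you stated but did not prove. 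It must be used inside the exponential, not only in expectation.
\item In the cross term, $e^{-\lambda\ind[v\in A]}\le 1$.
\end{enumerate}
Together these give $\sum_{v\in R}x(v)e^{-\lambda\ind[v\in A]}b^{c(v,R)}\le x(R)$, which closes the induction. Markov then gives exponent $\phi(t)\,x(A)$ with $t=1-e^{-\lambda}$. Choosing $t=\varepsilon/2$ with second-order Taylor estimates yields $\varepsilon^2/4$. A cruder choice gives only $\varepsilon^2/8$, which would not match the hypothesis $k_i\ge 4\varepsilon^{-2}\ln(2m)$.
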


We note that Addanki et al.~\cite[Thm.~7]{Addanki0MM22} proved a similar
guarantee under the hypothesis $k_i \ge 3\varepsilon^{-2}\ln(2m)$ for every
$1 \le i \le m$, but with diversity factor~$6$, in place of~$2$. For small
quotas, that is for $k = O(\log n)$, the algorithm of Theorem~\ref{thm:four}
in Section~\ref{sec:four} achieves diversity factor~$4$ with exact fairness in
polynomial time.

Factor~$2$ cannot be improved, no matter how far we relax the fairness requirement: Addanki et al.~\cite[Thm.~10]{Addanki0MM22} prove that for all
constants $\alpha < 2$ and $\beta > 0$, unless $\mathrm{P} = \mathrm{NP}$, no
polynomial-time algorithm achieves diversity factor~$\alpha$ and at
least~$\beta k_i$ points per group.

\subsection{Greedy rounding}

\emph{Greedy rounding} (Algorithm~\ref{alg:round}) takes a feasible
threshold~$\gamma$ and a vector
$x \in \mathcal{P}_\gamma$, and draws an independent set of the conflict
graph~$H_{\gamma/2}$. We write~$N(v)$ for the neighborhood of $v$ and
$N[v] = N(v) \cup \{v\}$ for the closed neighborhood; in this section, all
neighborhoods are taken with respect to $H_{\gamma/2}$.

\begin{algorithm}[ht]
\caption{Greedy rounding of $x \in \mathcal{P}_\gamma$.}\label{alg:round}
\begin{algorithmic}[1]
\State $R \gets V$, $S \gets \varnothing$
\While{$x(R) > 0$}
  \State draw $v \in R$ with probability $x(v)/x(R)$
  \State $S \gets S \cup \{v\}$, $R \gets R \setminus N[v]$
\EndWhile
\State \Return $S$
\end{algorithmic}
\end{algorithm}

The output is an independent set $S$ of $H_{\gamma/2}$: Each chosen vertex
removes its entire closed neighborhood, so no later choice is adjacent to an
earlier one. Moreover, each iteration removes at least one vertex from~$R$, so
the algorithm runs in polynomial time.

For $R \subseteq V$ write $S(R)$ for the output of the loop of
Algorithm~\ref{alg:round} started with that $R$.
The process is self-similar: conditioned on the first draw $v$, the rest of
the run is the process started from $R \setminus N[v]$, so
$S(R) = \{v\} \cup S(R \setminus N[v])$. The analysis rests on one consequence
of the ball constraints. A point lies at distance less than $\gamma/2$ from
$v$ exactly when it is $v$ itself or is adjacent to $v$, so
$N[v] = B(v,\gamma/2)$ for every $v \in V$ and therefore
\begin{equation}\label{eq:nbhd}
  x\bigl(N[v]\bigr) \le 1\,, \qquad \forall v \in V\,.
\end{equation}
This bound survives restriction to any~$R$:
$x\bigl(N[v] \cap R\bigr) \le x\bigl(N[v]\bigr)$. The analysis below uses only
the self-similarity and the bound~\eqref{eq:nbhd}.

Addanki et al.~\cite{Addanki0MM22} round a vector in $\mathcal{P}_\gamma$
using a different algorithm: They draw a random ordering of the points in the
support of $x$, sampling without replacement with probabilities proportional
to $x$. They keep every point that comes first in its own ball of radius
$\gamma/2$. The output meets the quotas \emph{only in expectation}.

Under their rule a point is discarded as soon as some point of its ball
precedes it. This holds whether that point was itself kept or discarded. A
discarded point therefore goes on to discard others. Suppose $H_{\gamma/2}$ is
the path $(u, v, w)$ and their algorithm draws the ordering~$u, v, w$. Their
rule keeps only $u$: It discards $v$ because $u$ precedes it, and it discards
$w$ because the discarded $v$ precedes it. On the other hand, Algorithm~\ref{alg:round},
conditional on the same first draw of~$u$, returns~$\{u, w\}$.
Indeed, the two
executions can be coupled so that our output always contains
theirs.\footnote{Give each point~$v$ an independent exponential clock of rate~$x(v)$. We let both algorithms follow the same clock order. A draw from~$R$
with probability~$x(v)/x(R)$ is the same as taking the first element of~$R$ in
the clock order; since the exponential clocks are memoryless, this holds at
every step. Because $B(v,\gamma/2) = N[v]$, their rule keeps exactly those~$v$
with the smallest clock in~$N[v]$. Greedy rounding puts every such~$v$ into~$S$: it drops~$v$ from~$R$ only when it takes a member of~$N[v]$, and every
other member of~$N[v]$ has a larger clock, so none is taken before~$v$.}
In deciding which items to include, our approach echoes some of the {\em Pivot} algorithm of correlation clustering ~\cite{10.1145/1411509.1411513}.

\subsection{The concentration inequality}

In this subsection we show that for every set $A \subseteq V$, the size
$|S \cap A|$ is unlikely to fall far below $x(A)$. Later, in the proof of
Theorem~\ref{thm:two}, we apply the bound to show that the output of greedy
rounding is likely to be $\varepsilon$-fair. The events $v \in S$ are not
independent, since whether $v$ enters $S$ depends on every earlier draw, so we
cannot use Chernoff to get the desired tail bound. Instead, we bound
$\E\bigl[e^{-\lambda |S \cap A|}\bigr]$ directly in Lemma~\ref{lem:mgf}, by
induction on the remaining set~$R$, using only the self-similarity of the
random process and~\eqref{eq:nbhd}. Markov's inequality applied to
$e^{-\lambda |S \cap A|}$ then turns this into a tail bound for $|S \cap A|$.
Both bounds are in terms of~$x(A)$, not $\E\bigl[|S \cap A|\bigr]$; this is
enough for us, as the fairness constraint guarantees~$x(X_i) \ge k_i$.

\begin{lemma}\label{lem:mgf}
Let~$\gamma$ be feasible, let $x \in \mathcal{P}_\gamma$, let~$S$ be the
output of greedy rounding (Algorithm~\ref{alg:round}),
and let $A \subseteq V$. Then for every $\lambda \ge 0$,
\begin{equation*}
  \E\Bigl[e^{-\lambda |S \cap A|}\Bigr] \le b^{-x(A)}\,,
\end{equation*}
where $b = 2 - e^{-\lambda} \in [1,2)$.
\end{lemma}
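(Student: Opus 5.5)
We prove the stronger statement that for every $R \subseteq V$,
\[
  \E\Bigl[e^{-\lambda |S(R) \cap A|}\Bigr] \le b^{-x(R \cap A)}\,,
\]
by strong induction on $|R|$, or equivalently on the size of the support of $x$ inside $R$. Taking $R = V$ then gives the lemma. In the base case $x(R) = 0$, the loop does nothing and $S(R) = \varnothing$. Moreover $x(R\cap A) = 0$, so both sides equal $1$.

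For the inductive step, condition on the first draw $v$, which occurs with probability $x(v)/x(R)$. By self-similarity, $S(R) = \{v\} \cup S(R \setminus N[v])$, and $R \setminus N[v]$ is strictly smaller. Write $a_u = x(N[u] \cap R \cap A)$, which is at most $1$ by~\eqref{eq:nbhd}. The set $R\setminus N[v]$ has $A$-weight $x(R\cap A) - a_v$. Since $e^{-\lambda} = 2-b$, the induction hypothesis gives
\[
  \E\Bigl[e^{-\lambda |S(R)\cap A|}\Bigr] \le b^{-x(R\cap A)} \sum_{v\in R} \frac{x(v)}{x(R)}\, c_v\, b^{a_v}\,,
\]
where $c_v = 2-b$ if $v \in A$ and $c_v = 1$ otherwise. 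It therefore suffices to show
\[
  \sum_{v \in R} x(v)\, c_v\, b^{a_v} \le x(R)\,.
\]

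This averaging inequality is the main obstacle. A per-term bound cannot work, because $b^{a_v} > 1$ when $v \notin A$. The gain from the terms with $v\in A$ must pay for the loss from those with $v\notin A$.

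First use convexity of $t \mapsto b^t$ on $[0,1]$: since $0 \le a_v \le 1$, we have $b^{a_v} \le 1 + (b-1)a_v$. For $v \in A$, we also use $(2-b)(1+(b-1)a) \le 1 + (b-1)a - (b-1)$, which follows from $(b-1)(1+(b-1)a) \ge b-1$. Then
\[
  \sum_v x(v)\, c_v\, b^{a_v} \le x(R) + (b-1)\Bigl(\sum_{v\in R} x(v)\, a_v - x(R\cap A)\Bigr)\,.
\]
Finally, swap the order of summation and use symmetry of adjacency:
\[
  \sum_{v\in R} x(v)\, a_v = \sum_{u \in R\cap A} x(u)\, x(N[u]\cap R) \le x(R\cap A)\,,
\]
again by~\eqref{eq:nbhd}. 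So the bracket is nonpositive, and since $b \ge 1$ the inductive step is complete.
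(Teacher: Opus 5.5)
Your proposal is correct and follows the paper's proof essentially step for step. It uses the same strengthened induction over the remaining set $R$ and the same convexity bound $b^{a_v} \le 1 + (b-1)a_v$. It closes with the same double count via symmetry of closed neighborhoods, giving $\sum_{v \in R} x(v)\,a_v \le x(R \cap A)$. Your per-term inequality for $v \in A$ is exactly the paper's step of bounding the factor $e^{-\lambda}$ by $1$ on the cross term, just applied before summing rather than after.
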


\begin{proof}
For $R \subseteq V$ let $a(R) = x(A \cap R)$ and
$\Phi(R) = \E\bigl[e^{-\lambda |S(R) \cap A|}\bigr]$. We prove the following
by induction on~$|R|$:
\begin{equation}
  \Phi(R) \le b^{-a(R)}\,. \label{eq:IH}
\end{equation}
The base case is $x(R) = 0$: the process stops at once, so
$S(R) = \varnothing$, and $a(R) = 0$ since $x$ is non-negative, so both sides
of~\eqref{eq:IH} are~$1$. For the inductive case, suppose $x(R) > 0$. For
$v \in R$ let $c(v,R) = x(A \cap N[v] \cap R)$. We observe that

\begin{enumerate}
\item $c(v,R) \le x(N[v]) \le 1$,
\item $x\bigl(A \cap (R \setminus N[v])\bigr) = a(R) - c(v,R)$, and
\item $\sum_{v \in R} x(v)\,c(v,R) \le a(R)$.
\end{enumerate}

The first observation follows from~\eqref{eq:nbhd}; the second from the
definitions of~$a(R)$ and~$c(v,R)$; the third from neighborhoods being mutual,
namely,
  \[
    \sum_{v \in R} x(v)\, c(v,R)
    = \sum_{v \in R}\, \sum_{u \in A \cap N[v] \cap R} x(v)\,x(u)
    = \sum_{u \in A \cap R} x(u)\, x\bigl(N[u] \cap R\bigr)
    \le \sum_{u \in A \cap R} x(u)
    = a(R)\,.
  \]

Write $\ind[v \in A]$ for the indicator of $v \in A$. Conditioning on the
first draw in Algorithm~\ref{alg:round},
invoking self-similarity for the equality, and applying
inequality~\eqref{eq:IH} to $R \setminus N[v]$, inductively, via the second
observation, we get
\begin{equation}
  \Phi(R)
  = \sum_{v \in R} \frac{x(v)}{x(R)}\, e^{-\lambda \ind[v \in A]}\,
        \Phi\bigl(R \setminus N[v]\bigr)
  \le \frac{b^{-a(R)}}{x(R)} \sum_{v \in R} x(v)\,
        e^{-\lambda \ind[v \in A]}\, b^{c(v,R)}\,.
  \label{eqn:phir}
\end{equation}
The function $t \mapsto b^{t}$ is convex, so for $c(v,R) \in [0,1]$ it follows
that $b^{c(v,R)} \le 1 + (b-1)c(v,R)$. Hence
\begin{align*}
  \sum_{v \in R} x(v)\, e^{-\lambda \ind[v \in A]}\, b^{c(v,R)}
  &\le \sum_{v \in R} x(v)\, e^{-\lambda \ind[v \in A]}
     + (b-1) \sum_{v \in R} x(v)\, e^{-\lambda \ind[v \in A]}\, c(v,R) \\
  \intertext{The first sum here is $x(R\setminus A) + e^{-\lambda}x(A\cap R) = x(R) - (1-e^{-\lambda})a(R)$,
  and the second sum is bounded using $e^{-\lambda\ind[v\in A]}\le1$, so}
  &\le \bigl(x(R) - (1 - e^{-\lambda})\,a(R)\bigr)
     + (b-1) \sum_{v \in R} x(v)\, c(v,R)\,, \\
  \intertext{which, by the third observation and $b-1=1-e^{-\lambda}$, gives}
  &\le \bigl(x(R) - (1 - e^{-\lambda})\,a(R)\bigr) + (b-1)\,a(R)
   = x(R)\,.
\end{align*}
Substituting this bound into the right-hand side
of inequality~\eqref{eqn:phir}, we have $\Phi(R) \le b^{-a(R)}$. Hence, by
induction,~\eqref{eq:IH} holds for all $R \subseteq V$. Taking $R = V$ yields
the lemma.
\end{proof}

\begin{lemma}\label{lem:tail}
In the setting of Lemma~\ref{lem:mgf}, for every $\varepsilon \in (0,1)$,
\[
  \Pr\bigl[\, |S \cap A| \le (1-\varepsilon)x(A) \,\bigr]
  \le e^{-\varepsilon^{2}x(A)/4}\,.
\]
\end{lemma}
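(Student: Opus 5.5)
The plan is the standard Chernoff-style argument: apply Markov's inequality to the nonnegative random variable $e^{-\lambda|S\cap A|}$ and bound its expectation by Lemma~\ref{lem:mgf}. Write $a = x(A)$ and $b = 2-e^{-\lambda}$. For any $\lambda \ge 0$, the event $|S\cap A| \le (1-\varepsilon)a$ is the same as $e^{-\lambda|S\cap A|} \ge e^{-\lambda(1-\varepsilon)a}$. Markov's inequality and Lemma~\ref{lem:mgf} then give
\[
  \Pr\bigl[\,|S\cap A| \le (1-\varepsilon)a\,\bigr]
  \le e^{\lambda(1-\varepsilon)a}\, b^{-a}
  = \exp\Bigl(-a\bigl(f(\lambda) - \lambda(1-\varepsilon)\bigr)\Bigr),
\]
where $f(\lambda) = \ln(2-e^{-\lambda})$. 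It therefore suffices to choose $\lambda \ge 0$ with $f(\lambda) - \lambda(1-\varepsilon) \ge \varepsilon^2/4$. If $a = 0$ the claimed bound is $1$ and there is nothing to prove.

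The key analytic step is the lower bound $f(\lambda) \ge \lambda - \lambda^2$ for all $\lambda \ge 0$. The Taylor expansion at $0$ suggests it, since $f(0)=0$, $f'(0)=1$ and $f''(0)=-2$. To prove it, set $g(\lambda) = f(\lambda) - \lambda + \lambda^2$, so that $g(0) = 0$. A direct computation gives $f'(\lambda) = 1/(2e^{\lambda}-1)$, hence $g'(0) = 0$. For the second derivative,
\[
  g''(\lambda) = 2 - \frac{2e^{\lambda}}{(2e^{\lambda}-1)^2}.
\]
This is nonnegative because, writing $y = e^{\lambda} \ge 1$, we have $(2y-1)^2 - y = (4y-1)(y-1) \ge 0$. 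So $g$ is convex with $g(0)=g'(0)=0$, and therefore $g \ge 0$ on $[0,\infty)$.

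Finally, take $\lambda = \varepsilon/2$. Then
\[
  f(\lambda) - \lambda(1-\varepsilon) \ge \lambda - \lambda^2 - \lambda + \varepsilon\lambda = \varepsilon\lambda - \lambda^2 = \varepsilon^2/4 .
\]
Substituting this into the Markov bound yields $e^{-\varepsilon^2 a/4}$, which is the claim. The only delicate point is the elementary inequality for $f$, and the convexity argument above handles it cleanly. The choice $\lambda = \varepsilon/2$ is exactly the maximizer of $\varepsilon\lambda - \lambda^2$, which explains where the constant $4$ comes from.
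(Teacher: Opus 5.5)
Your proof is correct and follows the same strategy as the paper: Markov's inequality on $e^{-\lambda|S\cap A|}$, the bound of Lemma~\ref{lem:mgf}, then a choice of parameter that makes the exponent at least $\varepsilon^2/4$. The only difference is the calculus step. The paper reparametrizes by $t = 1-e^{-\lambda}$, so that the exponent is $-\ln(1-t^2)+\varepsilon\ln(1-t)$, and bounds it with truncated Taylor series at $t=\varepsilon/2$. You instead prove the single inequality $\ln(2-e^{-\lambda}) \ge \lambda-\lambda^2$ by convexity and take $\lambda=\varepsilon/2$, which reaches the same constant more directly.
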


\begin{proof}
For any $\lambda > 0$, let $b = 2 - e^{-\lambda}$ be as in Lemma~\ref{lem:mgf}
and $t = b - 1 = 1 - e^{-\lambda}$. Note that $\lambda = -\ln(1-t)$ and~$t$
ranges over $(0,1)$. Markov's inequality applied to the positive variable
$e^{-\lambda |S \cap A|}$, followed by Lemma~\ref{lem:mgf}, gives
\begin{align}
  \Pr\bigl[\, |S \cap A| \le (1-\varepsilon)x(A) \,\bigr]
  &= \Pr\bigl[e^{-\lambda |S \cap A|}
       \ge e^{-\lambda(1-\varepsilon)x(A)}\bigr] \nonumber \\
  &\le e^{\lambda(1-\varepsilon)x(A)}\,
       \E\bigl[e^{-\lambda |S \cap A|}\bigr] \nonumber \\
  &\le e^{\phi(t)\,x(A)}\,,
  \label{eqn:markov}
\end{align}
where
\begin{equation}
  \phi(t)
  = \lambda(1-\varepsilon) - \ln b
  = -(1-\varepsilon)\ln(1-t) - \ln(1+t)
  = -\ln(1-t^{2}) + \varepsilon\ln(1-t)\,.
  \label{eqn:phit}
\end{equation}
Inequality~\eqref{eqn:markov} holds for every $t \in (0,1)$. To prove the
lemma it therefore suffices to find one~$t$ with
$\phi(t) \le -\varepsilon^{2}/4$. As a warm-up, we first exhibit one with
$\phi(t) \le -\varepsilon^{2}/8$.

We use the fact that $\ln(1-z) \le -z$ for every $z < 1$. At $z = t$ this
yields $\ln(1-t) \le -t$, and at $z = -\frac{t^{2}}{1-t^{2}}$ this yields
$-\ln(1-t^{2}) \le \frac{t^{2}}{1-t^{2}}$. Hence, inequality~\eqref{eqn:phit} gives
\[
  \phi(t) \le \frac{t^{2}}{1-t^{2}} - \varepsilon t\,.
\]
Evaluating the above at $t = \varepsilon/4$ we have $t^{2} \le 1/16$, so
$\frac{t^{2}}{1-t^{2}} \le 2t^{2}$, and therefore
\[
  \phi(\varepsilon/4)
  \le \frac{\varepsilon^{2}}{8} - \frac{\varepsilon^{2}}{4}
  = -\frac{\varepsilon^{2}}{8}\,.
\]
Combined with~\eqref{eqn:markov}, this implies
\[
  \Pr\bigl[\, |S \cap A| \le (1-\varepsilon)x(A) \,\bigr]
  \le e^{-\varepsilon^{2}x(A)/8}\,.
\]

In order to get the sharper bound in the lemma statement, we use the complete
Taylor series
$\ln(1-z) = -\sum_{n \ge 1} \frac{z^{n}}{n}$, which holds for $|z| < 1$.
Using the first two terms gives
\[\ln(1-z) \le -z - \frac{z^{2}}{2}\,,\]
which at $z = t$ yields $\ln(1-t) \le -t - \frac{t^{2}}{2}$.
Meanwhile,
bounding the Taylor series terms with $n \ge 2$ by $\tfrac12 \sum_{n \ge 2} z^{n}$ gives
\[\ln(1-z) \ge -z - \frac{z^{2}}{2(1-z)}\,.\]
At $z = t^{2}$, this
yields $-\ln(1-t^{2}) \le t^{2} + \frac{t^{4}}{2(1-t^{2})}$. Hence
inequality~\eqref{eqn:phit} gives
\[
  \phi(t) \le t^{2} + \frac{t^{4}}{2(1-t^{2})}
    - \varepsilon t - \frac{\varepsilon t^{2}}{2}\,.
\]
Evaluating the above at~$t = \varepsilon/2$, we have~$t^{2} \le 1/4$, so
$\frac{t^{4}}{2(1-t^{2})} \le t^{4}$, and therefore
\[
  \phi(\varepsilon/2)
  \le \frac{\varepsilon^{2}}{4} + \frac{\varepsilon^{4}}{16}
      - \frac{\varepsilon^{2}}{2} - \frac{\varepsilon^{3}}{8}
  = -\frac{\varepsilon^{2}}{4} - \frac{\varepsilon^{3}}{8}
      + \frac{\varepsilon^{4}}{16}
  \le -\frac{\varepsilon^{2}}{4}\,,
\]
the last inequality holds because $\varepsilon < 1$. Combined
with~\eqref{eqn:markov}, this implies
\[
  \Pr\bigl[\, |S \cap A| \le (1-\varepsilon)x(A) \,\bigr]
  \le e^{-\varepsilon^{2}x(A)/4}\,,
\]
and the lemma follows.
\end{proof}

\subsection{Proof of Theorem~\ref{thm:two}}

\begin{proof}
Compute $\gamma_{\max}$ and a vector $x^{*} \in \mathcal{P}_{\gamma_{\max}}$
in polynomial time (see Section~\ref{sec:prelim} for details).
Run greedy rounding (Algorithm~\ref{alg:round}) on~$x^{*}$ and let~$S$ be its
output.

\emph{Diversity.} $S$ is independent in $H_{\gamma_{\max}/2}$, so
$\dv(S) \ge \gamma_{\max}/2$, which is at least $\ell^{*}/2$
by~\eqref{eq:gammamax}.
We note that this always holds, regardless of the random choices the algorithm
makes.

\emph{Fairness.} Fix a group~$X_i$. The fairness constraint of
$\mathcal{P}_{\gamma_{\max}}$ gives $x^{*}(X_i) \ge k_i$, so
$(1-\varepsilon)k_i \le (1-\varepsilon)x^{*}(X_i)$ and Lemma~\ref{lem:tail}
with $A = X_i$ gives
\[
  \Pr\bigl[\,|S \cap X_i| \le (1-\varepsilon)k_i\,\bigr]
  \le e^{-\varepsilon^{2}x^{*}(X_i)/4}
  \le e^{-\varepsilon^{2}k_i/4}
  \le \frac{1}{2m}\,.
\]
The second step uses $x^{*}(X_i) \ge k_i$, and the last uses the assumption of
Theorem~\ref{thm:two} that $k_i \ge 4\varepsilon^{-2}\ln(2m)$.
Since $X_i$ was arbitrary, a union bound over the~$m$ groups gives
\[
  \Pr\bigl[\, S \text{ is } \varepsilon\text{-fair} \,\bigr]
  \ge 1 - \sum_{i=1}^{m}
      \Pr\bigl[\, |S \cap X_i| \le (1-\varepsilon)k_i \,\bigr]
  \ge 1 - m \cdot \frac{1}{2m}
  = \frac{1}{2}\,.
\]
\end{proof}

The output can also be capped at the quotas. Delete points from any group~$X_i$ with $|S \cap X_i| > k_i$ until $|S \cap X_i| = k_i$. Deleting points
cannot lower a minimum distance over pairs, so $\dv(S) \ge \ell^{*}/2$
survives. The deletions stop at~$k_i$, which is at least~$(1-\varepsilon)k_i$,
so the lower bounds survive as well.

Finally, we note that even if the groups overlap, the algorithm works in the
following sense: given a vector $x \in \mathcal{P}_\gamma$, greedy rounding
yields a set~$S$ with $\dv(S) \ge \gamma/2$ that is $\varepsilon$-fair with
probability at least~$\tfrac12$, provided $k_i \ge 4\varepsilon^{-2}\ln(2m)$
for every $1 \le i \le m$. Indeed, the argument never uses the fact that the
groups are disjoint.
Capping the output at the quotas, however, does use disjointness: with
overlapping groups a single deletion can reduce two counts at once.

\section{4-approximately diverse with exact fairness}\label{sec:four}

In this section, we estimate in polynomial time, within a factor
of~$4$, the value~$\ell^{*}$; in time exponential in~$k$, we produce an exactly fair set of
diversity at least~$\ell^{*}/4$. The factor doubles that of
Section~\ref{sec:two} because the rounding needs the total $x$-value of every
second-neighborhood in the conflict graph to be at most~$1$, which forces the
threshold down to $\gamma/4$.

We follow Feige~\cite{Feige08} in distinguishing an \emph{approximation}
algorithm, which must return a feasible \emph{solution} whose objective value
is close to the optimum, from an \emph{estimation} algorithm, which need only
return a \emph{number} close to the optimum objective function value.

\begin{theorem}\label{thm:four}
Fair Max-Min Diversification admits a polynomial-time $4$-estimation algorithm
and a $n^{O(1)} 2^{O(k)}$-time $4$-approximation algorithm. The former returns
a number $\gamma$ with $\ell^{*} \le \gamma \le 4\,\ell^{*}$; the latter
returns an exactly fair set,~$S$, with $\dv(S) \ge \ell^{*}/4$.
\end{theorem}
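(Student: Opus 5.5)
The plan is to work throughout at the threshold $\gamma_{\max}$ and a vector $x \in \mathcal{P}_{\gamma_{\max}}$, both computable in polynomial time as in Section~\ref{sec:prelim}. The estimation algorithm simply outputs $\gamma = \gamma_{\max}$. The lower bound $\ell^{*} \le \gamma_{\max}$ is~\eqref{eq:gammamax}. For the upper bound $\gamma_{\max} \le 4\ell^{*}$, it suffices to prove the following existence statement: for every feasible $\gamma$ and $x \in \mathcal{P}_\gamma$, there is an exactly fair set that is independent in the conflict graph $H_{\gamma/4}$, and hence has diversity at least $\gamma/4$. The $2^{O(k)}$-time algorithm will be a constructive version of that existence proof. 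So a single combinatorial lemma gives both claims of Theorem~\ref{thm:four}.

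The LP fact I rely on is a two-hop version of~\eqref{eq:nbhd}. Work in $H = H_{\gamma/4}$, and write $N^2[v]$ for the set of vertices within two hops of $v$. If $u \in N^2[v]$, the triangle inequality gives $d(u,v) < \gamma/4 + \gamma/4 = \gamma/2$. Hence $N^2[v] \subseteq B(v,\gamma/2)$, and the ball constraint yields
\[
  x\bigl(N^2[v]\bigr) \le 1\,, \qquad \forall v \in V\,.
\]
This is the fractional analogue of Haxell's degree condition. In Haxell's theorem, parts of size at least $2\Delta$ guarantee an independent transversal. Here the fairness constraints $x(X_i) \ge k_i$ play the role of large parts, and the two-hop bound plays the role of the factor $2\Delta$.

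I would reduce to an independent-transversal-type problem by giving group $i$ exactly $k_i$ slots. Then I maintain a partial solution $S$, independent in $H$, with $|S \cap X_i| \le k_i$ for every $i$. While some group $i_0$ is deficient, I run a Haxell/Graf--Haxell augmenting-tree procedure. The tree is rooted at a free slot of $X_{i_0}$. At each stage we have a set $W$ of tree slots together with the vertices of $S$ occupying them. We look for a candidate vertex $y$ in a group of $W$ that is not adjacent in $H$ to any tree vertex. If $y$ is blocked only by vertices of $S$, those blocking vertices join the tree as new children, and their slots are added to $W$. If $y$ is unblocked, we swap along the tree path: $y$ enters and its blocker leaves, and so on up to the root. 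This increases $|S|$ or, failing that, improves a lexicographic signature of layer sizes. For the key counting step, suppose no candidate exists. Then every vertex in the groups of $W$ lies in $N[z]$ for some tree vertex $z$. The tree has at most $2|W|-1$ vertices, and one can charge them in pairs (candidate, blocker) so that the uncovered mass is dominated by $\sum_{z} x\bigl(N^2[z]\bigr)$ over at most $|W|-1$ anchor vertices $z$. This sum is less than $|W| \le \sum_{i \in W} k_i \le x\bigl(\bigcup_{i\in W} X_i\bigr)$, a contradiction. The two-hop neighborhood is needed exactly here: a candidate and the $S$-vertex it displaced are adjacent, so their two closed neighborhoods both fit inside one $N^2[\cdot]$.

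The main obstacle is making this charging precise: choosing which tree vertices serve as anchors so that every blocked vertex is counted inside some $x(N^2[z])\le 1$ with at most $|W|-1$ anchors. A second difficulty is proving termination. For termination I would follow Graf and Haxell's signature argument: the vector of layer sizes strictly decreases lexicographically after each step. This signature lives in a space of size $2^{O(k)}$, because the tree has $O(k)$ vertices. So there are $2^{O(k)}$ augmentation steps, each taking $n^{O(1)}$ time, which gives the claimed running time. Termination also completes the existence argument, and therefore the bound $\gamma_{\max} \le 4\ell^{*}$ for the polynomial-time estimator.
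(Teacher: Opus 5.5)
Your outer reduction matches the paper's. You output $\gamma_{\max}$, get $\ell^{*}\le\gamma_{\max}$ from \eqref{eq:gammamax}, and get both $\gamma_{\max}\le 4\ell^{*}$ and the $n^{O(1)}2^{O(k)}$ algorithm from an exactly fair independent set of $H_{\gamma/4}$, built by Haxell-type augmentation using $x(N^2[v])\le 1$. Your candidate rule (avoid the closed neighbourhoods of all tree vertices) is a legitimate, slightly weaker variant of the paper's $z\notin N^2[F]$.

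\textbf{Gap 1: the counting step.} The theorem's content is the existence lemma, and you leave its decisive step open. Worse, the version you state is false. If no candidate exists, it does \emph{not} follow that every vertex of the groups of $W$ lies in $N[z]$ for a tree vertex $z$. Vertices of $S$ in those groups that block no center cannot be candidates, since they are already in $S$ (you must require candidates to lie outside $S$). They need not be near the tree, and they carry $x$-mass up to $1$ each, which your count omits. Also, the relevant adjacency is not ``a candidate and the $S$-vertex it displaced'' but a center and its own blockers.

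Write $F=\langle z_1,\dots,z_t\rangle$ for the centers and $Q$ for the groups of $W$. Every blocker is adjacent to the center it blocks, so all closed neighbourhoods of tree vertices lie in $N^2[F]$, which has mass at most $t$. The blocker sets are nonempty and pairwise disjoint, because each new center avoids the earlier blockers, so $t\le|N_S(F)|$. Let $T=(S\cap X(Q))\setminus N^2[F]$; it is disjoint from the blockers. Then
\[
x\bigl(N^2[F]\cup(S\cap X(Q))\bigr)\le t+|T|\le|N_S(F)|+|T|\le|S\cap X(Q)|\le k(Q)-1<k(Q)\le x\bigl(X(Q)\bigr),
\]
where $k(Q)-1$ comes from the deficiency of the root group. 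So some vertex of $X(Q)$ with positive $x$ avoids both $N^2[F]$ and $S$, and it is a candidate. The $|T|$ term is exactly what the slack in your $|W|\le k(Q)$ must absorb.

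\textbf{Gap 2: termination.} A ``vector of layer sizes'' is undefined for your one-center-at-a-time tree, and the decrease is not automatic. As the paper notes, with one swap per move and an arbitrary candidate, the natural signature can rise and states can recur. There are two valid fixes.
\begin{itemize}
\item The paper's fix: choose a candidate with the fewest blockers and use the signature $\langle|N_S(z_1)|,\dots,|N_S(z_t)|,\mu,\infty\rangle$.
\item The Asadpour et al.\ rule, which your ``swap along the tree path'' gestures at. It needs a precise stopping condition. After an unblocked candidate replaces a blocker of $z_j$, continue the chain with $z_j$ only if $z_j$ has become blocker-free; otherwise truncate to $F_j$. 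Taken literally, pushing $z_j$ into $S$ while it still has other blockers destroys independence. With this rule, $\langle|N_S(z_1)|,\dots,|N_S(z_t)|,\infty\rangle$ drops lexicographically at every non-final move.
\end{itemize}
In either version, the $2^{k}$ bound needs $|N_S(F)|\le k(Q)-1\le k-1$, which comes from the chain above. It does not follow merely from the tree having $O(k)$ vertices, since that claim itself needs this bound.
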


The algorithm rounds $x \in \mathcal{P}_\gamma$ to an independent set of
the conflict graph~$H_{\gamma/4}$. For a vertex $v$ of a graph, we use
\(
  N^2[v] = \bigcup_{u \in N[v]} N[u]
\)
to denote the closed second-neighborhood of~$v$, that is, the set of vertices
at ``distance'' at most~$2$ from~$v$ in~$H_{\gamma/4}$.

Lemma~\ref{lem:round} is the rounding step. With this lemma, we immediately
deduce Theorem~\ref{thm:four}. After we introduce the tight instance, the remainder of this section comprises a proof of Lemma~\ref{lem:round}.

\begin{lemma}\label{lem:round}
Let $\gamma > 0$ and let $x \in \mathcal{P}_\gamma$. Then $H_{\gamma/4}$ has
an independent set~$S$ with $|S \cap X_i| = k_i$ for every~$i$, and such an
$S$ can be computed in $n^{O(1)} 2^{O(k)}$ time.
\end{lemma}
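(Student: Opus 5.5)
We will reduce Lemma~\ref{lem:round} to a Haxell-type independent transversal argument in the conflict graph $G = H_{\gamma/4}$, driven by LP weight rather than by degree. Throughout, neighborhoods $N[\cdot]$ and $N^2[\cdot]$ refer to $G$.

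\textbf{Step 1: the two-hop bound.} If $w \in N^2[v]$, there is $u$ with $d(v,u) < \gamma/4$ and $d(u,w) < \gamma/4$, so $d(v,w) < \gamma/2$. Hence $N^2[v] \subseteq B(v,\gamma/2)$, and the ball constraint gives
\[
  x\bigl(N^2[v]\bigr) \le 1 \qquad \forall v \in V .
\]
Together with $x(X_i) \ge k_i$, this is the only property of $x$ we will use.

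\textbf{Step 2: reduce exact fairness to transversals.} Replace each group $X_i$ by $k_i$ ``slots''. We build $S$ one point at a time. Maintain a partial solution: an independent set $S$ of $G$ with $|S\cap X_i|\le k_i$ for every $i$. Pick a group $X_i$ with $|S \cap X_i| < k_i$; we want to add one more point of $X_i$, possibly after reshuffling $S$. This mirrors the Graf--Haxell augmenting procedure for independent transversals, in which one maintains an alternating tree of groups and ``blocking'' vertices.

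\textbf{Step 3: the augmenting tree.} Start with the deficient slot of $X_i$ as the root.
\begin{itemize}
\item At each stage we hold a collection of slots/groups $\mathcal{T}$ in the tree, and a set $Y$ of vertices of $S$ that are \emph{blockers} (points of $S$ in the tree groups, whose removal would free a slot).
\item Consider the candidate vertices $C$: the points of the tree groups not already in $S$.
\item If some $c \in C$ has no $G$-neighbor in $S$ and its group is below quota (or its slot can be freed), we add $c$ and propagate the swaps back along the tree to the root. The number of points of $S$ grows by one.
\item Otherwise every candidate is dominated by $S$. In that case we pick a minimal set of points of $S$ dominating unused candidates, and add the groups of those dominators (and, through them, their slots) to the tree.
\end{itemize}

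\textbf{Step 4: the counting argument (main obstacle).} We must show the tree cannot get stuck. A stuck tree with $t$ slots involves groups whose total LP weight is at least their total quota, which is at least $t$, roughly $t+1$ counting the deficient slot. Every candidate lies in $N[y]$ for some $y$ among at most $t$ blockers. In Haxell's setting one needs candidates covered by $N^2$ of the blockers, because blockers may be removed. The covered weight is therefore at most
\[
  \sum_{y} x\bigl(N^2[y]\bigr) \le t ,
\]
strictly less than the available weight. This contradiction should follow if we account carefully for the points of $S$ themselves, whose weight is at most $1$ each and which lie in their own $N^2$.

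Getting this accounting exactly right — matching Haxell's ``$2\Delta$'' slack with the factor $x(N^2[v]) \le 1$ versus weight $\ge$ quota — is the delicate step. We would mimic the Graf--Haxell proof with $x$-weight substituted for vertex counts.

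\textbf{Step 5: running time.} Each augmentation increases $|S|$, so there are at most $k$ augmentations. The tree involves at most $k$ slots, and its exploration may branch over choices of dominating subsets of $S$. Enumerating those subsets costs $2^{O(k)}$, which gives the $n^{O(1)}2^{O(k)}$ bound. The existence statement is independent of running time, and the same counting shows $\gamma_{\max}/4$ yields a valid estimate for Theorem~\ref{thm:four}.
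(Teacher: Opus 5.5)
Your Step~1 is exactly the property of $x$ the paper uses. However, Steps~3--5 omit the two ideas that make the argument work, and one step as written is false.

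\emph{The augmentation step.} You take as candidates all points of the tree groups outside $S$. When a blocker-free candidate $c$ replaces a point $y$ of its group, you ``propagate the swaps back along the tree to the root.'' That is the bipartite-matching picture, and it breaks for independent sets. The vertex that $y$ was blocking may have other blockers, so removing $y$ frees nothing. Moreover, $c$ may be adjacent to an earlier center, so the swap just creates a new blocker.

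The missing invariant is that each new center $z$ be chosen outside $N^2[F]$, the two-hop neighborhood of the current centers $F$. (Excluding $\bigcup_y N^2[y]$ over the current blockers, as your Step~4 hints, works equally well.) This has two consequences:
\begin{itemize}
\item the blocker sets of distinct centers are disjoint, so $|F|\le|N_M(F)|$;
\item swapping $z$ in for a same-group blocker $y$ of some center $z_j$ changes no blocker set except $N_M(z_j)$, which loses exactly $y$.
\end{itemize}
Once this exclusion is part of the definition of a candidate, your Step~4 count closes. Let $X(Q)$ be the union of the root group and the groups of the blockers. The positive-weight non-candidates in $X(Q)$ lie either in $N^2[F]$, of total weight at most $|F|\le|N_M(F)|$, or in $(M\cap X(Q))\setminus N^2[F]$, of weight at most $1$ each. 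Since the blockers lie in $N^2[F]\cap M\cap X(Q)$, the total is at most $|M\cap X(Q)|\le k(Q)-1<k(Q)\le x(X(Q))$. The $-1$ comes from the deficient root group.

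\emph{Termination.} This is the larger gap, and both the running time and the existence claim rest on it. Existence is not independent of running time here: you need some argument that the augmentation cannot cycle. A swap does not enlarge $S$, so a move should do one swap and then truncate the centers to $z_1,\dots,z_j$, or to $z_1,\dots,z_{j-1}$ if $N_M(z_j)$ became empty, and continue. Your Step~5 gives no potential function, and enumerating dominating subsets is not what bounds the work.

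The paper always picks a candidate with the fewest blockers, say $\mu$ of them. It shows that the signature $\langle |N_M(z_1)|,\dots,|N_M(z_{|F|})|,\mu,\infty\rangle$ strictly decreases lexicographically at every move that does not end the phase:
\begin{itemize}
\item a grow keeps the first $|F|+1$ entries and replaces $\infty$ by the new $\mu$;
\item a swap lowers entry $j$, either to $|N_M(z_j)|-1$, or to the new $\mu=0$ when $z_j$ is dropped, since $z_j$ is then a blocker-free candidate.
\end{itemize}
The blocker counts plus $\mu$ total at most $k-1$. Hence the entries other than $\infty$, with $\mu$ raised by one, form a composition of some $s\le k$, so there are fewer than $2^k$ signatures. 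This gives at most $2^k$ moves per phase and $k\,2^k$ moves in all, which is where the $n^{O(1)}2^{O(k)}$ bound comes from.
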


\begin{proof}[Proof of Theorem~\ref{thm:four}]
Both algorithms begin by computing~$\gamma_{\max}$ together with a vector
$x \in \mathcal{P}_{\gamma_{\max}}$, which takes polynomial time
(Section~\ref{sec:prelim}). We have $\gamma_{\max} \ge \ell^{*}$
by~\eqref{eq:gammamax}. In the other direction, Lemma~\ref{lem:round} applied
to~$x$ produces an independent set~$S$ of~$H_{\gamma_{\max}/4}$ that meets
every quota exactly, so~$S$ is exactly fair and its points are pairwise at
distance at least~$\gamma_{\max}/4$, whence
$\ell^{*} \ge \dv(S) \ge \gamma_{\max}/4$. The estimation algorithm returns~$\gamma_{\max}$, and the approximation algorithm returns~$S$, which
Lemma~\ref{lem:round} computes in $n^{O(1)} 2^{O(k)}$ time.
\end{proof}

\subsection{Tight instance}

In this subsection we show that no rounding algorithm for the ball LP that
outputs an exactly fair solution can improve the factor of~$4$ in
Theorem~\ref{thm:four}.
We construct an instance with $\ell^{*} = 1$ on
which~$\mathcal{P}_\gamma$ is feasible at~$\gamma = 4$.

\begin{lemma}\label{lem:witness}
There is an instance with $n=12$, $m=4$ and $k_i = 1$ for all $i$, for which
$\ell^{*} = 1$ while $\mathcal{P}_\gamma$ is feasible at~$\gamma = 4$.
\end{lemma}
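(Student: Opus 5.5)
The plan is to build the instance from a graph with maximum degree~$2$ whose vertex set is split into four parts of size three, chosen so that the graph has \emph{no} independent transversal. This is exactly the regime just below Haxell's threshold: parts of size $2\Delta$ always admit an independent transversal, and parts of size $2\Delta-1$ can fail. Given such a graph~$G$ on $12$ vertices, I take its four parts as the groups $X_1,\dots,X_4$, set $k_i=1$ for every~$i$, and define $d(u,v)=1$ if $uv$ is an edge of~$G$ and $d(u,v)=2$ for every other distinct pair. Every distance lies in $\{1,2\}$, so the triangle inequality holds automatically, since $2 \le 1+1$.

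First I would check $\ell^{*}=1$. A fair set is a transversal, one point from each group. Every pairwise distance is at least~$1$, so $\ell^{*}\ge 1$. Having $\ell^{*}\ge 2$ would require a transversal with no pair at distance~$1$, that is, an independent transversal of~$G$, and none exists by construction. Next I would verify that $\mathcal{P}_4$ is feasible via the uniform vector $x\equiv 1/3$.
\begin{itemize}
\item The fairness constraints hold because $x(X_i)=1$ for each group.
\item The open ball $B(p,2)$ consists of $p$ together with its $G$-neighbours, since these are exactly the points at distance $<2$.
\item Maximum degree~$2$ gives $|B(p,2)|\le 3$, and hence $x(B(p,2))\le 1$.
\end{itemize}

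The main obstacle is to exhibit a concrete degree-$2$ graph on four triples with no independent transversal and to verify that property. My first attempt follows the Szab\'o--Tardos style: take $G$ to be a disjoint union of three copies of $K_{2,2}=C_4$, say $Q_1,Q_2,Q_3$, each with sides $L_j,R_j$ of size~$2$. An independent transversal must, within each $Q_j$, use vertices from only one side. Choosing a side for each $Q_j$ leaves six available vertices, and the transversal then exists iff these six admit a system of distinct representatives for the four groups. I would distribute the four triples over the six sides and then check Hall's condition against all $2^3=8$ side choices, aiming for each choice to leave some group with no available vertex.

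A naive split such as $L_1\cup\{l_2\}$, $R_1\cup\{r_2\}$ fails, because two vertices on the same side of one $C_4$ can both be used. The groups therefore have to mix the cycles more evenly. If no assignment of triples to three $C_4$'s works, I would fall back to a $12$-cycle, or to other unions of cycles, partitioned into four triples, and search by hand or by a small exhaustive check. Such an example is guaranteed to exist by the known tightness of Haxell's bound for $\Delta=2$.
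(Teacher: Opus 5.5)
Your reduction is sound and matches the paper's: a maximum-degree-$2$ graph on four triples with no independent transversal, a metric in which edges have length~$1$ so that $B(p,2)$ is $p$ plus its neighbours, and the uniform vector $x \equiv 1/3$. Your $\{1,2\}$-valued metric is a valid, slightly simpler substitute for the paper's shortest-path metric capped at~$4$. The gap is that you leave as a search the one nontrivial part of the lemma: a concrete $12$-point, $4$-group instance, together with a proof that it has no independent transversal. Appealing to ``known tightness of Haxell's bound'' does not close this as stated, because tightness alone does not guarantee exactly four parts. You would have to cite the specific Szab\'o--Tardos construction, which is exactly what the paper writes out. Your fallback is also a dead end. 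By the cycle-plus-triangles theorem of Fleischner and Stiebitz, every partition of the vertices of $C_{12}$ into four triples admits a proper $3$-colouring in which each triple receives all three colours. Each colour class is then an independent transversal, so no $12$-cycle example exists.

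The missing idea is how to spread the triples over three copies of $K_{2,2}$. It is the reverse of the split you tried. Let $Q_1$ act as a hub contributing a \emph{single} vertex to each group, and let each group take an \emph{entire} side of $Q_2$ or $Q_3$. Align them so that the two sides of $Q_2$ go to the groups whose hub vertices form $R_1$, and the two sides of $Q_3$ go to those whose hub vertices form $L_1$. Concretely, with $L_1=\{l_1,l'_1\}$ and $R_1=\{r_1,r'_1\}$, take $X_1=\{r_1\}\cup L_2$, $X_4=\{r'_1\}\cup R_2$, $X_2=\{l_1\}\cup L_3$ and $X_3=\{l'_1\}\cup R_3$. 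An independent transversal cannot meet both sides of $Q_2$, so it represents $X_1$ or $X_4$ by a vertex of $R_1$. By the same argument with $Q_3$, it represents $X_2$ or $X_3$ by a vertex of $L_1$. These two vertices are adjacent, a contradiction. In your side-choice language, choosing $L_1$ in $Q_1$ leaves one of $X_1,X_4$ with no available vertex, and choosing $R_1$ does the same to one of $X_2,X_3$. With this instance, the rest of your argument goes through unchanged.
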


\begin{proof}
Let $V$ be the union of three four-point \emph{clusters}
\[
  C_t = \{a_t,a'_t,b_t,b'_t\}, \qquad 1 \le t \le 3,
\]
each carrying a copy of $K_{2,2}$ with \emph{shores} $\{a_t,a'_t\}$ and
$\{b_t,b'_t\}$. Let $d$ be the shortest-path distance in this graph with unit
edge lengths, capped at~$4$: two points are at distance~$1$ if they lie on
opposite shores of a cluster, $2$ if they lie on the same shore, and~$4$ if
they lie in different clusters. We partition $V$ into four groups
\[
\begin{array}{ll}
  X_1 = \{b_1, a_2, a'_2\}, & X_2 = \{a_1, a_3, a'_3\},\\[2pt]
  X_3 = \{a'_1, b_3, b'_3\}, & X_4 = \{b'_1, b_2, b'_2\},
\end{array}
\]
and set every quota,~$k_i$, to~$1$.
Observe that each group holds one point of $C_1$ and one full shore of $C_2$ or of $C_3$:
the two shores of $C_2$ go to $X_1$ and $X_4$, and those of $C_3$ go to
$X_2$ and $X_3$. Figure~\ref{fig:witness} shows the instance.

\begin{figure}[t]
\centering
\begin{tikzpicture}[
  x=1.5cm, y=1.5cm,
  pt/.style={circle, draw, semithick, inner sep=0pt, minimum size=4mm,
             font=\scriptsize},
  label distance=2.5pt,
  every label/.append style={font=\footnotesize, inner sep=1.5pt},
]
% #1 x offset, #2 cluster index, then the groups of a_t, a'_t, b_t, b'_t
\newcommand{\clu}[6]{%
  \begin{scope}[shift={(#1,0)}]
    \node[pt, label={above left:$a_{#2}$}]   (a#2) at (0,1) {#3};
    \node[pt, label={below left:$a'_{#2}$}]  (c#2) at (0,0) {#4};
    \node[pt, label={above right:$b_{#2}$}]  (b#2) at (1,1) {#5};
    \node[pt, label={below right:$b'_{#2}$}] (d#2) at (1,0) {#6};
    \draw[semithick] (a#2) -- (b#2) (a#2) -- (d#2)
                     (c#2) -- (b#2) (c#2) -- (d#2);
    \node[font=\small] at (0.5,-0.8) {$C_{#2}$};
  \end{scope}%
}
\clu{0}{1}{2}{3}{1}{4}
\clu{2.5}{2}{1}{1}{4}{4}
\clu{5}{3}{2}{2}{3}{3}
\end{tikzpicture}
\caption{The instance of Lemma~\ref{lem:witness}. Each cluster is a copy of
$K_{2,2}$, with its two shores drawn as the left and right columns. Adjacent
points are at distance~$1$, the two points of a shore at~$2$, and points in
different clusters at~$4$. The number inside a vertex is the index of the
group holding it: each shore of $C_2$ and of $C_3$ lies in a single group,
while $C_1$ has one point in each of the four groups.}%
\label{fig:witness}
\end{figure}

\emph{The ball LP is feasible at $\gamma = 4$.} Take the uniform vector $x$
with $x(u) = \tfrac13$ for all $u$. For each group we have $x(X_i) = 1 = k_i$.
Since the ball is open, $B(p,2)$ holds only $p$ and the two points on the
opposite shore of $p$'s cluster. So $x\bigl(B(p,2)\bigr) = 1$ for every~$p$.

\emph{The optimal diversity is $\ell^{*} = 1$.} Every distance is at least
$1$, and two points on opposite shores of a cluster are at distance exactly
$1$, so it is enough to show that every exactly fair $S$ intersects both
shores of some cluster. Suppose for the sake of contradiction that it
does not.
Since $X_1$ and $X_4$ hold the two shores of $C_2$, if $S$ took its point of
each group inside $C_2$ it would intersect both shores of $C_2$. So at least
one of the two points lies in $C_1$, and the points of $C_1$ in these groups,
$b_1$ and $b'_1$, form one shore of $C_1$. The same argument applied to $X_2$,
$X_3$ and $C_3$ puts a point of the other shore of $C_1$ in $S$. Then $S$
intersects both shores of $C_1$, a contradiction. So $\dv(S) = 1$ for every
exactly fair $S$, and $\ell^{*} = 1$.
\end{proof}

Together with Theorem~\ref{thm:four}, the lemma shows that the integrality
gap of the ball LP under exact fairness (i.e., the supremum of
$\gamma_{\max}/\ell^{*}$ over instances) is exactly $4$.

The graph and the groups are due to Szab\'{o} and Tardos~\cite{SzaboT06}, who
defined them in the context of studying independent transversals.
Lemma~\ref{lem:witness} adds a metric, chosen so that the ball LP is feasible
at~$\gamma = 4$.

\subsection{Iterative augmentation}\label{sec:proc}

Our algorithm adapts the augmenting procedure behind Haxell's
theorem~\cite{Haxell95a,Haxell01}, as sketched by Graf and
Haxell~\cite{GrafH20}. It builds an independent set one point at a time,
rearranging earlier choices when no new point can be added outright.
The crux is to show that the procedure never runs out of candidates to add.
Their argument for this is purely combinatorial, whereas ours uses the
properties of a fractional solution to the ball LP and the geometry of the
metric. Asadpour et al.~\cite{AsadpourFS12} use a similar idea for the Santa
Claus problem, but with a configuration LP instead of the ball LP\@.

For the rest of the section, all neighborhoods are taken with respect to
$H_{\gamma/4}$, and \emph{independent} means independent in $H_{\gamma/4}$.
Write~$\gr(v)$ for the index of the group containing~$v$, which is well
defined because the groups partition $V$,
so that $v \in X_{\gr(v)}$, and we write $N_M(v) = N(v) \cap M$ for any set
$M \subseteq V$. The algorithm only uses two properties of $x$:
\begin{equation}\label{eq:roundhyp}
  x(X_i) \ge k_i\,, \quad \forall 1 \le i \le m\,, \quad \text{and} \quad
  x\bigl(N^2[v]\bigr) \le 1\,,  \quad \forall v \in V\,.
\end{equation}
The first property is the fairness constraints of $\mathcal{P}_\gamma$. For
the second, $N[u] \subseteq B(u,\gamma/4)$ for every $u$, so
$N^2[v] \subseteq B(v,\gamma/2)$ by the triangle inequality, and the ball
constraint at $v$ gives
$x\bigl(N^2[v]\bigr) \le x\bigl(B(v,\gamma/2)\bigr) \le 1$.

A \emph{partial solution} is an independent set $M \subseteq V$, with
$|M \cap X_i| \le k_i$ for every $1 \le i \le m$. A group~$X_i$ is
\emph{deficient} if $|M \cap X_i| < k_i$. For $v \notin M$ the \emph{blockers}
of $v$ are the vertices of~$N_M(v)$, i.e., the points of~$M$ that keep
$M \cup \{v\}$ from being independent. The procedure runs in \emph{phases}. A
phase begins with a partial solution,~$M$, and the index~$r$ of a chosen
deficient group, its \emph{root}, and it ends as soon as~$|M|$ has grown by
one.

Within a phase, the state is specified by a partial solution~$M$, a sequence of
\emph{centers} $F = \langle z_1,\dots,z_{|F|} \rangle$, and the index,~$r$, of a
root group. We write $F_l = \langle z_1,\dots,z_l \rangle$ for the sequence of
the first~$l$ centers, so $F_0 = \langle\, \rangle$ is the empty sequence and
$F_{|F|} = F$.
We associate with a state~$(M, F, r)$ the set of group indices
\[
  Q = \{r\} \cup \bigl\{ \gr(y) : y \in N_M(z),\ z \in F \bigr\}\,,
\]
of the root group and of every group that contributes a blocker. Note that~$Q$
is defined with respect to a state $(M, F, r)$, which we do not make explicit
as it will always be clear from the context. We further write
$N[F] = \bigcup_{z \in F} N[z]$ and $N^2[F] = \bigcup_{z \in F} N^2[z]$ for
the sets of vertices within one and two steps of a center,
$N_M(F) = \bigcup_{z \in F} N_M(z)$ for the blockers of the centers,
$X(Q) = \bigcup_{i \in Q} X_i$ for the union of the groups indexed by~$Q$, and
$k(Q) = \sum_{i \in Q} k_i$ for their total quota.

\begin{definition}\label{def:candidate}
Given a state~$(M, F, r)$, we call a vertex~$z \notin M$ a candidate for~$F$
if $x(z) > 0$, $\gr(z) \in Q$, and $z \notin N^2[F]$.
\end{definition}

\begin{definition}
We call a state $(M, F, r)$ \emph{good} if it satisfies the following three
conditions:
\begin{description}
\item[(I1)] $M$ is a partial solution.
\item[(I2)] For every $1 \le l \le |F|$, the center $z_l$ is a candidate for
  $F_{l-1}$ and has at least one blocker.
\item[(I3)] The root group $X_r$ is deficient.
\end{description}
\end{definition}

A phase starts at the state $(M, \langle\, \rangle, r)$, where $M$ is the
partial solution the phase inherits and $X_r$ is a deficient group, so (I1)
and (I3) hold while (I2) is vacuous. Lemma~\ref{lem:moveok} shows that the
state remains good throughout the phase, and that the phase ends by adding a
vertex to~$M$. We first derive three properties of good states.

\begin{lemma}\label{lem:derived}
Let~$(M, F, r)$ be a good state. Then:
\begin{description}
\item[(D1)] the blocker sets $N_M(z_1),\dots,N_M(z_{|F|})$ are pairwise
  disjoint, and no $z_l$ lies in $M$;
\item[(D2)] $N_M(F) \subseteq M \cap X(Q)$;
\item[(D3)] $|F| \le |N_M(F)| \le |M \cap X(Q)| \le k(Q) - 1$.
\end{description}
\end{lemma}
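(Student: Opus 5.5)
We prove the three properties in order, each directly from (I1)--(I3) and the definitions of candidate and~$Q$.

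\emph{(D1).} By (I2), each $z_l$ is a candidate for $F_{l-1}$, and candidates lie outside~$M$ by Definition~\ref{def:candidate}; so no $z_l$ lies in~$M$. For disjointness, take $l < l'$ and suppose some $y \in N_M(z_l) \cap N_M(z_{l'})$. Then $y \in N[z_l]$ and $z_{l'} \in N[y]$, so $z_{l'} \in N^2[z_l] \subseteq N^2[F_{l'-1}]$. This contradicts (I2), which says $z_{l'}$ is a candidate for $F_{l'-1}$ and hence lies outside $N^2[F_{l'-1}]$. This is the only step that uses the two-hop exclusion, and it is the key point.

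\emph{(D2).} This is immediate from the definition of~$Q$. Every blocker $y \in N_M(z)$ with $z \in F$ lies in~$M$ and contributes $\gr(y)$ to~$Q$, so $y \in X_{\gr(y)} \subseteq X(Q)$.

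\emph{(D3).} We prove the chain of three inequalities in turn.
\begin{itemize}
\item \emph{First inequality.} By (I2), each $N_M(z_l)$ is nonempty, and by (D1) these sets are pairwise disjoint. Hence $|N_M(F)| = \sum_l |N_M(z_l)| \ge |F|$.
\item \emph{Second inequality.} This follows from (D2).
\item \emph{Third inequality.} The groups partition~$V$, so $|M \cap X(Q)| = \sum_{i \in Q} |M \cap X_i|$. By (I1), each term is at most $k_i$. Since $r \in Q$ and $X_r$ is deficient by (I3), the $r$-th term is at most $k_r - 1$. Summing gives $|M \cap X(Q)| \le k(Q) - 1$.
\end{itemize}

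No step is a real obstacle; the only care needed is invoking the two-hop exclusion in (D1) with the correct index order.
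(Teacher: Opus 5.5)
Your proposal is correct and follows essentially the same route as the paper for all three parts. The only cosmetic difference is in (D1): you argue by contradiction that a shared blocker $y$ would place $z_{l'}$ in $N^2[z_l] \subseteq N^2[F_{l'-1}]$, whereas the paper restates the two-hop exclusion directly as $N[z_l] \cap N[F_{l-1}] = \varnothing$ and reads off disjointness of the blocker sets — these are the same fact.
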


\begin{proof}
\textbf{(D1) holds:} By (I2), each $z_l$ is a candidate for $F_{l-1}$. Hence
$z_l \notin M$, and $z_l \notin N^2[F_{l-1}]$, which means that
$N[z_l] \cap N[F_{l-1}] = \varnothing$. For any $j < l$ we have
$N_M(z_j) \subseteq N[z_j] \subseteq N[F_{l-1}]$, while
$N_M(z_l) \subseteq N[z_l]$, so $N_M(z_j)$ and $N_M(z_l)$ are disjoint.

\textbf{(D2) holds:} A blocker $y \in N_M(z_l)$ lies in $M$, and
$\gr(y) \in Q$ by the definition of $Q$, so $y$ lies in $X(Q)$.

\textbf{(D3) holds:} The $|F|$ blocker sets are nonempty by (I2) and disjoint
by (D1), so $|F| \le |N_M(F)|$, and $|N_M(F)| \le |M \cap X(Q)|$ by (D2).
The groups indexed by $Q$ are pairwise disjoint, so
$|M \cap X(Q)| = \sum_{i \in Q} |M \cap X_i|$. Each term of that sum is at
most $k_i$ by (I1), and the term of the root is at most its quota less one by
(I3), so the sum is at most $k(Q) - 1$.
\end{proof}

Everything is in place to describe one iteration of the algorithm, which we
call a \emph{move}. Starting from a good state $(M, F, r)$, a move either
finds a new element to add to the partial solution $M$ (thus ending the phase)
or it takes us to another good state $(M', F', r)$. Algorithm~\ref{alg:move}
gives the details.

\begin{algorithm}[ht]
\caption{One move from a good state,~$(M, F, r)$.}\label{alg:move}
\begin{algorithmic}[1]
\State Choose a candidate~$z$ to add to~$F$ minimizing $|N_M(z)|$
  \label{ln:cand}
\If{$N_M(z) \neq \varnothing$} \Comment{grow}
  \State Append $z$ to $F$
\ElsIf{$X_{\gr(z)}$ is not deficient} \Comment{swap}
  \State Let $z_j \in F$ be a center with a blocker $y \in N_M(z_j)$
    such that $\gr(y) = \gr(z)$ \label{ln:blocker}
  \State $M \gets (M \setminus \{y\}) \cup \{z\}$
  \If{$N_M(z_j) \neq \varnothing$}
    \State $F \gets F_j$
  \Else
    \State $F \gets F_{j-1}$
  \EndIf
\Else \Comment{the phase ends}
  \State $M \gets M \cup \{z\}$
\EndIf
\end{algorithmic}
\end{algorithm}

\subsection{Correctness}

The correctness proof of the procedure comprises two lemmas.
Lemma~\ref{lem:welldef} shows that every line of Algorithm~\ref{alg:move} is
defined for a good state, so the procedure is never stuck.
Lemma~\ref{lem:moveok} shows that a move either ends the phase or takes a good
state to another good state.

\begin{lemma}\label{lem:welldef}
Let $(M, F, r)$ be a good state. Then every line of Algorithm~\ref{alg:move}
is well defined: The state always has a candidate $z$ for Line~\ref{ln:cand}
to choose from, and if Line~\ref{ln:blocker} is reached, then there exists a
center $z_j$ having a blocker $y$ with $\gr(y) = \gr(z)$.
\end{lemma}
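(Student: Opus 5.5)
The plan is to prove the two claims separately: existence of a candidate by an LP-weight counting argument, and existence of the blocker at Line~\ref{ln:blocker} by a direct counting argument on the group $X_{\gr(z)}$.

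\emph{Existence of a candidate.} Suppose, for contradiction, that no candidate exists. Every $z\notin M$ with $x(z)>0$ and $\gr(z)\in Q$ then lies in $N^2[F]$. Split the support of $x$ inside $X(Q)$ into the part outside $M$ and the part inside $M$. This gives
\[
  x\bigl(X(Q)\bigr) \le x\bigl(N^2[F]\bigr) + x\bigl(M\cap X(Q)\setminus N^2[F]\bigr)
  \le \sum_{z\in F} x\bigl(N^2[z]\bigr) + |M\cap X(Q)\setminus N^2[F]|.
\]
The second property in~\eqref{eq:roundhyp} bounds the first sum by $|F|$. Because $x\le 1$, the second term is at most the number of points of $M\cap X(Q)$ outside $N^2[F]$.

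The key observation is that every blocker $y\in N_M(F)$ lies in $N[F]\subseteq N^2[F]$, so it has already been counted inside $N^2[F]$ and does not contribute to the second term. Hence
\[
  x\bigl(X(Q)\bigr)\le |F| + |M\cap X(Q)| - |N_M(F)| \le |M\cap X(Q)| \le k(Q)-1,
\]
using $|F|\le |N_M(F)|$ and the remaining inequalities of (D3). On the other hand, the fairness constraints give $x(X(Q))\ge k(Q)$, since the groups are disjoint. This is a contradiction, so a candidate exists.

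The delicate point, and the main obstacle, is the bookkeeping between points of $M$ (which may carry any $x$-value up to $1$) and the $N^2[F]$ budget. I would verify carefully that bounding $x(M\cap X(Q)\setminus N^2[F])$ by its cardinality, combined with (D3), is enough.

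\emph{Existence of the blocker.} Line~\ref{ln:blocker} is reached only when the chosen $z$ has no blockers and $X_{\gr(z)}$ is not deficient. If $\gr(z)=r$ this contradicts (I3), so $\gr(z)\ne r$. Since $\gr(z)\in Q\setminus\{r\}$, the definition of $Q$ directly supplies some $y\in N_M(z_j)$ for some $z_j\in F$ with $\gr(y)=\gr(z)$.
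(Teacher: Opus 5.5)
Your proof is correct and follows essentially the same argument as the paper. For the candidate, you use the same weight bound $x(N^2[F]) \le |F|$, bound the points of $M\cap X(Q)$ outside $N^2[F]$ by their count, and observe that the blockers $N_M(F)$ lie in $N^2[F]$ and so are disjoint from that set; for the blocker at Line~\ref{ln:blocker}, you use (I3) and the definition of $Q$. The only difference is that you phrase the first part as a contradiction, whereas the paper directly shows $x\bigl(X(Q)\setminus(N^2[F]\cup M)\bigr)\ge 1$.
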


\begin{proof}
We first produce a candidate.
Let~$T$
be~$(M \cap X(Q)) \setminus N^2[F]$, and let~$t = |T|$.
The sets~$N_M(F)$ and~$T$ are disjoint subsets of $M \cap X(Q)$: the first
lies in $M \cap X(Q)$ by~(D2), and in $N^2[F]$, which $T$ avoids. Hence
$|N_M(F)| + t \le |M \cap X(Q)| \le k(Q) - 1$ by~(D3). Now, we bound the
$x$-value of the second (two-hop) neighborhood of~$F$,
\begin{equation}
  x\bigl(N^2[F]\bigr)
  \le \sum_{l=1}^{|F|} x\bigl(N^2[z_l]\bigr)
  \le |F|\,,\label{eqn:n2f}
\end{equation}
where we apply the second property of~\eqref{eq:roundhyp} to each set
$x(N^2[z_l])$. Each of the $t$ vertices of $T$ has $x$-value at most~$1$,
so with $|F| \le |N_M(F)|$ from~(D3), we have via~\eqref{eqn:n2f}
\begin{equation}
  x\bigl(N^2[F] \cup (M \cap X(Q))\bigr)
  \le x\bigl(N^2[F]\bigr) + t
  \le |F| + t \le |N_M(F)| + t \le k(Q) - 1\,.
  \label{eqn:kq1}
\end{equation}
On the other hand, groups indexed by~$Q$ are pairwise disjoint, so the
first property of~\eqref{eq:roundhyp} gives
\begin{equation}
  x\bigl(X(Q)\bigr) = \sum_{i \in Q} x(X_i)
  \ge \sum_{i \in Q} k_i = k(Q)\,.
  \label{eqn:kq}
\end{equation}
Every vertex of $X(Q) \cap \bigl(N^2[F] \cup M\bigr)$ lies in
$N^2[F] \cup (M \cap X(Q))$, and $x \ge 0$, so
\[
  x\bigl(X(Q) \setminus (N^2[F] \cup M)\bigr)
  \ge x\bigl(X(Q)\bigr) - x\bigl(N^2[F] \cup (M \cap X(Q))\bigr)
  \ge k(Q) - \bigl(k(Q) - 1\bigr) = 1\,,
\]
where we apply inequalities~\eqref{eqn:kq} and~\eqref{eqn:kq1} at the end.
Hence there exists some $z \in X(Q)$ with $z \notin N^2[F]$ and
$z \notin M$, but with
 $x(z) > 0$. Since $\gr(z) \in Q$, by the definition of $X(Q)$, so $z$ is a
candidate for~$F$ by
Definition~\ref{def:candidate}.

Now suppose Line~\ref{ln:blocker} is reached. There $X_{\gr(z)}$ is not
deficient, while $X_r$ is deficient by (I3), so $\gr(z) \neq r$. The vertex
$z$ is a candidate, so $\gr(z) \in Q$. Every index of $Q$ other than $r$ is
the group of a blocker of some center, so there are a center $z_l \in F$ and a
blocker $y \in N_M(z_l)$ with $\gr(y) = \gr(z)$. That is the pair needed in
Line~\ref{ln:blocker}.
\end{proof}

\begin{lemma}\label{lem:moveok}
Let $(M, F, r)$ be a good state. Then a move either ends the phase, turning
$M$ into a partial solution with one more vertex,
or takes the state to another good state,~$(M', F', r)$. Moreover, if the move
swaps at the center $z_j$ chosen in Line~\ref{ln:blocker}, then
$N_{M'}(z_l) = N_M(z_l)$ for every $1 \le l < j$ and
$|N_{M'}(z_j)| = |N_M(z_j)| - 1$; and if it drops $z_j$, then $z_j$ is a
candidate for $F_{j-1}$ in the resulting state.
\end{lemma}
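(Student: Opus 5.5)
Lemma~\ref{lem:welldef} guarantees that every line of Algorithm~\ref{alg:move} is defined, so the plan is a case analysis on the three branches of the move: grow, swap and phase end. In each branch we check (I1)--(I3) for the new state and, in the swap branch, the extra claims about blocker sets.

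\emph{Grow.} Here $M$ and $r$ are unchanged, so (I1) and (I3) persist. The chosen $z$ is a candidate for $F$ with $N_M(z)\neq\varnothing$, so appending it gives (I2) at index $|F|+1$. The earlier centers keep their witnesses: the set $Q$ for a prefix $F_{l-1}$ and the sets $N^2[F_{l-1}]$ depend only on $M$ and that prefix, both unchanged, so (I2) for $l\le|F|$ is untouched.

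\emph{Phase end.} Here $N_M(z)=\varnothing$ and $X_{\gr(z)}$ is deficient. Then $M\cup\{z\}$ is independent, and since $z\notin M$ it is one larger. It respects all quotas because the only group whose count rises is the deficient $X_{\gr(z)}$, whose count goes from below $k_{\gr(z)}$ to at most $k_{\gr(z)}$. So the result is a partial solution with one more vertex.

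\emph{Swap.} This is the main work. Set $M'=(M\setminus\{y\})\cup\{z\}$ with $\gr(y)=\gr(z)$.
\begin{itemize}
\item Group counts are unchanged, so the quotas hold and $X_r$ stays deficient, since $\gr(z)\neq r$ as shown in Lemma~\ref{lem:welldef}.
\item $M'$ is independent: $M\setminus\{y\}$ is independent, and $z$ has no neighbor in $M$ because $N_M(z)=\varnothing$.
\item The blocker sets change as claimed. For $l<j$, $z$ is not adjacent to $z_l$, since $z\notin N^2[F]\supseteq N[z_l]$. Also $y\in N_M(z_j)$ is not in $N_M(z_l)$, by the disjointness in (D1). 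Hence $N_{M'}(z_l)=N_M(z_l)$. For $z_j$, $y$ is removed and $z$ is not added because $z\notin N[z_j]$, so $|N_{M'}(z_j)|=|N_M(z_j)|-1$.
\end{itemize}

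It remains to check (I2) for the truncated sequence $F'$, equal to $F_j$ or $F_{j-1}$. For each $l$ in range we need $z_l$ to still be a candidate for $F_{l-1}$ with respect to $M'$. The condition $z_l\notin N^2[F_{l-1}]$ is purely graph-theoretic. We also have $z_l\notin M'$, since $z_l\notin M$ and $z_l\neq z$ because $z\notin N^2[F]$. The positivity $x(z_l)>0$ is unchanged. The group condition $\gr(z_l)\in Q$ is where $M$ matters. The relevant $Q$ for prefix $F_{l-1}$ is built from the blocker sets $N_{M'}(z_i)$ with $i<l\le j$, and these equal the old ones by the previous paragraph, so $Q$ for that prefix is unchanged.

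Nonempty blockers are needed only for the centers kept in $F'$. For $l<j$ they are unchanged. For $l=j$ we keep $z_j$ exactly when $N_{M'}(z_j)\neq\varnothing$, which is what the algorithm tests. When $z_j$ is dropped, the same argument shows $z_j$ is a candidate for $F_{j-1}$ under $M'$, which is the final claim of the lemma.

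The main subtlety I expect is bookkeeping that $Q$ is evaluated per prefix and that it does not shrink for the retained prefix. In particular, dropping $z_j$ might remove the group $\gr(y)$ from $Q$, but this is harmless because no retained center's candidacy refers to blockers of $z_j$ or later centers.
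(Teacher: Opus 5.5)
Your proposal is correct and follows essentially the same route as the paper: a case analysis over grow, swap and phase end. In the swap case, you use $z\notin N^2[F]$ and the disjointness in (D1) to show that only $N_M(z_j)$ changes (it loses $y$), and then note that $Q$ for each retained prefix depends only on the unchanged blocker sets of earlier centers. Your appeal to $\gr(z)\neq r$ for (I3) is harmless but unnecessary, since the swap leaves every group count unchanged.
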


\begin{proof}
Let $z$ be the candidate chosen in Line~\ref{ln:cand}.
Algorithm~\ref{alg:move} then takes one of three branches, according to
whether $N_M(z)$ is empty and, if it is, whether $X_{\gr(z)}$ is deficient.
Below we consider each case in turn.

\emph{Grow.} Here $N_M(z) \neq \varnothing$.
The set $M$ is untouched, so (I1) and (I3) still hold. Appending $z$ preserves
(I2): Line~\ref{ln:cand} chose $z$ as a candidate for $F$, which is what (I2)
asks of the new center $z_{|F|+1} = z$, and $z$ has a blocker.

\emph{Swap.} Here $N_M(z) = \varnothing$ and $X_{\gr(z)}$ is not deficient.
Let $z_j$ and $y$ be as in Line~\ref{ln:blocker}, which
Lemma~\ref{lem:welldef} shows exist. Put
$M' = (M \setminus \{y\}) \cup \{z\}$. It is independent, because $M$ is
independent and $N_M(z) = \varnothing$. From $y \in M$, $z \notin M$ and
$\gr(y) = \gr(z)$ we get $|M' \cap X_i| = |M \cap X_i|$ for every $i$. So (I1)
holds, and (I3) holds because no group count changes.

Now consider $F_j$ with the new set $M'$. No blocker set gains $z$, since
$z \notin N^2[F]$, and by (D1) the only one to lose $y$ is $N_M(z_j)$. Hence
$N_{M'}(z_l) = N_M(z_l)$ for $1 \le l < j$, while
$N_{M'}(z_j) = N_M(z_j) \setminus \{y\}$, of size one less, as the lemma
claims. So each $z_l$ with $1 \le l \le j$ is still a candidate for $F_{l-1}$:
the set $Q$ is unchanged, being determined by $r$ and the blockers of
$z_1,\dots,z_{l-1}$, all of index below $j$; the condition
$z_l \notin N^2[F_{l-1}]$ is untouched, since neither $z_l$ nor $F_{l-1}$
changed; and $z_l \notin M'$, since $z_l \notin M$ by (D1) and $z_l \ne z$,
because $z_l \in N^2[F]$ while $z \notin N^2[F]$. Taking $l = j$ gives the
last claim of the lemma.

If $N_{M'}(z_j) \neq \varnothing$, then $z_j$ has a blocker as well, the move
keeps $F_j$, and the state $(M', F_j, r)$ is good. Otherwise the move drops
$z_j$, and the state $(M', F_{j-1}, r)$ is good too: (I1) and (I3) were
checked above, and (I2) holds because $F_{j-1}$ carries just the centers
$z_1,\dots,z_{j-1}$, for which both conditions were verified.

\emph{The phase ends.} Here $N_M(z) = \varnothing$ and $X_i$ is deficient,
where $i = \gr(z)$. The set $M \cup \{z\}$ is independent, because $M$ is
independent and $N_M(z) = \varnothing$. It intersects $X_i$ in
$|M \cap X_i| + 1 \le k_i$ vertices, because $z \notin M$ and $X_i$ is
deficient, and it intersects every other group as $M$ does. So $M \cup \{z\}$
is a partial solution and $|M|$ grows by exactly one.
\end{proof}

So a phase never gets stuck. It remains to show that a phase
always terminates, which is the subject of the next subsection.

\subsection{Complexity}

To show that a phase terminates we bound the number of moves. Let~$\mu$ denote
the least number of blockers of a candidate for~$F$, which, by
Lemma~\ref{lem:welldef}, is well defined at every good state; it is the
quantity minimized at Line~\ref{ln:cand} of Algorithm~\ref{alg:move}. The
\emph{signature} of a state is the tuple
$\bigl\langle |N_M(z_1)|,\dots,|N_M(z_{|F|})|, \mu, \infty \bigr\rangle$,
where the symbol $\infty$ exceeds every integer, and signatures are compared
lexicographically. Since $\infty$ occurs in the last position only, no
signature is a prefix of another, so every comparison is decided at a position
where both signatures are defined.

By~(I2) the first~$|F|$ entries are positive integers, and by~(D1) they sum to
$|N_M(F)|$; the entry~$\mu$ is a non-negative integer. Since $Q$ is a set of
distinct indices, $k(Q) \le k$, so (D3) gives $|N_M(F)| \le k-1$ at every good
state.
If $\mu \ge 1$, the candidate $z$ chosen at Line~\ref{ln:cand} has a blocker,
so the move appends $z$ to $F$.
By Lemma~\ref{lem:moveok} the resulting state is good, and its centers are
$z_1,\dots,z_{|F|}$ and $z$, with pairwise disjoint blocker sets by (D1). So
it has $|N_M(F)| + \mu$ blockers in total, and the bound above applies to it:
$|N_M(F)| + \mu \le k-1$.
On the other hand, if $\mu = 0$, then $|N_M(F)| \le k-1$ by the same bound
applied to the state at hand. So, in both cases, the entries of the signature
other than the last sum to at most $k-1$, and raising $\mu$ by one makes them
all positive, with sum~$s$ in the range $1 \le s \le k$. Therefore, they form a
composition of $s$ into positive parts. Hence the number of signatures is at
most
\[
  \sum_{s=1}^{k} 2^{s-1} = 2^{k} - 1\,.
\]

\begin{lemma}\label{lem:descent}
If a move from a good state does not end its phase, the signature after it
is lexicographically smaller than the signature before it.
\end{lemma}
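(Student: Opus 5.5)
The plan is a direct case analysis on the branch of Algorithm~\ref{alg:move} that the move takes. Since the move does not end its phase, it either grows or swaps. For each branch I will compare the old signature $\langle n_1,\dots,n_{|F|},\mu,\infty\rangle$, where $n_l=|N_M(z_l)|$, with the new one position by position. I will locate the first position where they differ and show that the new entry there is smaller. Lemma~\ref{lem:moveok} guarantees the new state is good, so its signature (and its $\mu$) is well defined.

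\emph{Grow.} Here the chosen candidate $z$ has $|N_M(z)|=\mu\ge 1$, and $M$ is unchanged. Hence the blocker counts of $z_1,\dots,z_{|F|}$ are unchanged. The new signature is $\langle n_1,\dots,n_{|F|},\mu,\mu',\infty\rangle$, where $\mu'$ is the new minimum, which exists by Lemma~\ref{lem:welldef}. The first $|F|+1$ entries agree. At position $|F|+2$ the old signature has $\infty$ and the new one has the integer $\mu'$. So the signature strictly decreases.

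\emph{Swap.} Here the chosen $z$ has no blockers, so $\mu=0$ because $z$ minimizes $|N_M(\cdot)|$. Let $z_j$ be the center used at Line~\ref{ln:blocker}, with $j\le|F|$. By Lemma~\ref{lem:moveok}, $N_{M'}(z_l)=N_M(z_l)$ for $l<j$ and $|N_{M'}(z_j)|=n_j-1$. If $z_j$ is kept, the new signature is $\langle n_1,\dots,n_{j-1},n_j-1,\mu',\infty\rangle$. It agrees with the old one in the first $j-1$ positions and is smaller at position $j$. If $z_j$ is dropped, then $N_{M'}(z_j)=\varnothing$, so $n_j=1$. The new signature is $\langle n_1,\dots,n_{j-1},\mu',\infty\rangle$. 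By the last claim of Lemma~\ref{lem:moveok}, $z_j$ is a candidate for $F_{j-1}$ in the new state and has $0$ blockers there, so $\mu'=0<1=n_j$. Again the first difference is at position $j$, and the new entry is smaller. In both subcases $j\le|F|$, so position $j$ of the old signature really is the finite entry $n_j$, not $\mu$ or $\infty$.

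I expect the only delicate step to be the drop subcase. There one needs that $\mu'$ is computed with respect to the new state $(M',F_{j-1},r)$, whose set $Q$ may differ from the old one. This is handled by the statement of Lemma~\ref{lem:moveok} that $z_j$ is a candidate for $F_{j-1}$ in the resulting state. With that, the minimum $\mu'$ is at most $|N_{M'}(z_j)|=0$, and the comparison goes through.
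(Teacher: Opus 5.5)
Your proof is correct and is essentially the paper's argument: the same grow/swap case split, with the grow case decided at position $|F|+2$ ($\infty$ against the new finite $\mu$), and the swap case decided at position $j$ (either $n_j-1$, or the new $\mu=0$ when $z_j$ is dropped, via the last claim of Lemma~\ref{lem:moveok}). Your remarks that $n_j=1$ in the drop subcase and that the possibly changed $Q$ is covered by Lemma~\ref{lem:moveok} only make explicit what the paper leaves implicit.
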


\begin{proof}
Write $a_l = |N_M(z_l)|$ for the blocker counts before the move, and let $z$
be the candidate chosen in Line~\ref{ln:cand}, so that $|N_M(z)| = \mu$. Since
the move does not end the phase, it is a grow or a swap.

\emph{Grow.} The move appends $z$ to $F$ and leaves $M$ untouched, so the
first $|F|$ entries are unchanged and the entry after them is
$|N_M(z)| = \mu$, which is the value the signature already carried in that
position. The two signatures therefore agree in their first $|F|+1$ positions,
while position $|F|+2$ carries $\infty$ before the move and the new value of
$\mu$ after it. So the signature drops.

\emph{Swap.} Let $z_j$ be the center at which the swap happens, chosen in
Line~\ref{ln:blocker}. Position $j$ carries $a_j$ before the move. By
Lemma~\ref{lem:moveok} the swap takes $M$ to a partial solution $M'$ with
$N_{M'}(z_l) = N_M(z_l)$ for every $l < j$ and $|N_{M'}(z_j)| = a_j - 1$. If
$N_{M'}(z_j)$ is nonempty, the move keeps $F_j$ and position $j$ carries
$a_j - 1$ afterwards. If it is empty, the move drops $z_j$, which is then a
candidate for $F_{j-1}$ with no blockers, so the new value of $\mu$ is $0$ and
position $j$ carries it. Either way the two signatures agree in the first
$j-1$ positions and the new value at position $j$ is smaller, so the signature
drops.
\end{proof}

If Line~\ref{ln:cand} did not minimize the number of blockers, the signature
could rise, and a phase could then return to a state it had already visited.
Asadpour et al.~\cite{AsadpourFS12} measure progress the same way, by the
blocker counts of the centers with a final sufficiently large value. They do
not need the entry~$\mu$, and they choose their candidate arbitrarily. When a
swap empties a center's blocker set, they repeat the swap at that center
within the same step, so no state they compare carries a blocker-free
candidate. Algorithm~\ref{alg:move} performs a single swap and returns, and
$\mu$ records the candidate left waiting. Recursing at that center, as they
do, would dispense with $\mu$, at the cost of succinctness in presentation.
Line~\ref{ln:cand} instead takes one with the fewest blockers, as Haxell does
in the proof of her theorem~\cite{Haxell95a}.

\begin{lemma}\label{lem:terminate}
Every phase ends after at most $2^{k}$ moves. At the end of a phase, $M$ is a
partial solution and $|M|$ is one larger than at the start of the phase.
\end{lemma}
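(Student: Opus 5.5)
The plan is to combine the descent argument of Lemma~\ref{lem:descent} with the counting bound on signatures established just before it. A phase begins at the good state $(M,\langle\,\rangle,r)$, where $X_r$ is deficient. By Lemma~\ref{lem:welldef}, every line of Algorithm~\ref{alg:move} is well defined at a good state. By Lemma~\ref{lem:moveok}, each move either ends the phase or produces another good state. So, by induction on the number of moves, every state reached before the phase ends is good, and in particular has a well-defined signature.

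The key steps are as follows. First, I will recall from the preceding paragraph that every good state has a signature drawn from a set of at most $2^k-1$ possibilities. The reason is that the finite entries, after raising $\mu$ by one, form a composition of some $s\in\{1,\dots,k\}$. The bound on $s$ uses (D3), which gives $|N_M(F)|+\mu\le k-1$ whether or not $\mu\ge1$.

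Second, I will invoke Lemma~\ref{lem:descent}: each move that does not end the phase strictly decreases the signature in lexicographic order. Consider the sequence of states visited in a phase before its final move. Their signatures are strictly decreasing, so they are pairwise distinct, and there are at most $2^k-1$ of them. Hence at most $2^k-1$ non-ending moves occur, and adding the final move gives at most $2^k$ moves in total.

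One point needs care here. The lexicographic order on these tuples of varying length must be a genuine strict total order, so that strict decrease forbids any repetition. This holds because no signature is a prefix of another, so comparisons are always decided at a common position; the paper notes this already. Transitivity then follows, and strict descent implies distinctness.

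Third, I will consider the ending move. The phase can only stop through the third branch of Algorithm~\ref{alg:move}, since grow and swap moves lead to good states and the process continues. Lemma~\ref{lem:moveok} says that this branch turns $M$ into a partial solution with exactly one more vertex. That gives the second claim of the lemma.

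I expect no serious obstacle, since all the work lies in Lemmas~\ref{lem:moveok} and~\ref{lem:descent} and in the counting bound. The only delicate point is confirming that the signature count applies at every visited state, which in turn rests on goodness being maintained. I would state explicitly that (D3) gives $|N_M(F)|\le k(Q)-1\le k-1$, which uses that $Q$ consists of distinct indices.
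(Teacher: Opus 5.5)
Your proposal is correct and follows the paper's proof: Lemmas~\ref{lem:welldef} and~\ref{lem:moveok} keep every state good, Lemma~\ref{lem:descent} makes the signatures at the starts of successive moves strictly decreasing and hence distinct among fewer than $2^k$ values, and the phase can end only through the third branch. You leave one step implicit that the paper states: a grow leaves $M$ untouched and a swap exchanges one vertex of $M$ for another, so $|M|$ stays fixed until the final move. This matters because Lemma~\ref{lem:moveok} alone compares the final $M$ only with the $M$ at the start of that last move, not with the $M$ at the start of the phase.
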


\begin{proof}
By Lemma~\ref{lem:welldef} a move is always available at a good state, and by
Lemma~\ref{lem:moveok} a move that does not end the phase leaves a good state,
so the phase never gets stuck. By Lemma~\ref{lem:descent} the signatures
observed at the starts of successive moves of the phase are strictly
decreasing, hence pairwise distinct, and there are fewer than $2^{k}$ of them.
So the phase performs at most $2^{k}$ moves, and it therefore ends.
Algorithm~\ref{alg:move} leaves~$M$ untouched in a grow step and exchanges one
vertex of~$M$ for another in a swap step, so every move other than the last
leaves~$|M|$ unchanged. The last move adds one vertex to~$M$, which remains a
partial solution by Lemma~\ref{lem:moveok}.
\end{proof}

\subsection{Proof of Lemma~\ref{lem:round}}

\begin{proof}
Start with $M = \varnothing$, which is a partial solution, and repeat: If some
group is deficient, let $r$ be the index of such a group and run a phase. By
Lemma~\ref{lem:terminate} each phase ends with a partial solution one vertex
larger, and every partial solution satisfies
$|M| = \sum_i |M \cap X_i| \le \sum_i k_i = k$, so the loop runs at most $k$
phases. When it stops, no group is deficient, so $|M \cap X_i| \ge k_i$ for
every $i$, while $|M \cap X_i| \le k_i$ because $M$ is a partial solution.
Hence $M$ is an independent set of $H_{\gamma/4}$ with $|M \cap X_i| = k_i$
for every $i$, which proves the existence statement.

For the count, there are at most $k$ phases and at most $2^{k}$ moves in each,
by Lemma~\ref{lem:terminate}, hence at most $k\,2^{k}$ moves in total. One
move searches for a candidate with the fewest blockers and then either appends
it or performs one swap. The search scans the groups of $Q$, the set $N^2[F]$
and the set $M$, and the swap scans the blockers to find $y$ and $j$. All of
this is polynomial in $n$ and $k$. Finally $k \le n$, since
$k \le \sum_i x(X_i) = x(V) \le n$ by the first property
of~\eqref{eq:roundhyp} and $x \in [0,1]^V$. The total time is therefore
$n^{O(1)} 2^{O(k)}$.
\end{proof}

\section{Concluding remarks}

Theorem~\ref{thm:four} achieves exact fairness at factor~$4$ in time
$n^{O(1)}2^{O(k)}$, which is polynomial only when $k = O(\log n)$. The
exponential factor in~$k$ arises from bounding the number of moves in a phase
by the number of signatures a state can carry, which is $2^{O(k)}$. The same
obstacle stands in the way of computing the independent transversal that
Haxell's theorem guarantees. The lazy updates of Annamalai~\cite{Annamalai18}
remove it, at the cost of a slightly stronger assumption on the size of the
groups relative to the maximum degree, as shown by Graf et
al.~\cite{GrafH20,GrafHH22}. Whether some constant factor (independent of $m$)
is attainable with exact fairness in polynomial time is open. The best known
algorithm remains the $(m+1)$-approximation of Addanki et
al.~\cite{Addanki0MM22}.

If we relax the fairness requirement, the factor~$2$ of Theorem~\ref{thm:two}
is best possible, since, already at $m = 1$, Addanki et al.~\cite{Addanki0MM22}
rule out a smaller factor unless $\mathrm{P} = \mathrm{NP}$, however far the
requirement is relaxed. If we insist on exact fairness,
Lemma~\ref{lem:witness} shows that no rounding of the ball LP achieves a
factor below~$4$. Whether a stronger LP can attain factor~$2$ with exact
fairness is open.

\bibliographystyle{plainurl}
\bibliography{references}

\end{document}